\definecolor{lightblue}{RGB}{173, 216, 230}
\title{Improved EFX Approximation Guarantees under Ordinal-based Assumptions\thanks{Accepted at the 22nd International Conference on Autonomous Agents and Multiagent Systems}}
\author{
 Evangelos Markakis \\
 Athens University of Economics and Business \\ and Input Output Global (IOG)\\
 Athens, Greece \\
\texttt{markakis@gmail.com}
\And
Christodoulos Santorinaios \\
 Athens University of Economics and Business \\
 and Archimedes/Athena RC\\
 Athens, Greece\\
\texttt{santgchr@gmail.com}
}
\newcommand{\BibTeX}{\rm B\kern-.05em{\sc i\kern-.025em b}\kern-.08em\TeX}
\newcommand{\aaa}{\mathcal{A}}
\newcommand{\nn}{\mathcal{N}}
\newcommand{\mm}{\mathcal{M}}
\newtheorem{theorem}{Theorem}
\newtheorem{corollary}{Corollary}
\newtheorem{lemma}{Lemma}
\newtheorem{definition}{Definition}
\begin{document}
\maketitle
\begin{abstract}
Our work studies the fair allocation of indivisible items to a set of agents, and falls within the scope of establishing improved approximation guarantees. It is well known by now that the classic solution concepts in fair division, such as envy-freeness and proportionality, fail to exist in the presence of indivisible items. Unfortunately, the lack of existence remains unresolved even for some relaxations of envy-freeness, and most notably for the notion of EFX, which has attracted significant attention in the relevant literature. This in turn has motivated the quest for approximation algorithms, resulting in the currently best known $(\phi-1)$-approximation guarantee by \cite{amanatidis2020multiple}, where $\phi$ equals the golden ratio. So far, it has been notoriously hard to obtain any further advancements beyond this factor. Our main contribution is that we achieve better approximations, for certain special cases, where the agents agree on their perception of some items in terms of their worth. In particular, we first provide an algorithm with a $2/3$-approximation, when the agents agree on what are the top $n$ items (but not necessarily on their exact ranking), with $n$ being the number of agents. To do so, we also study a general framework that can be of independent interest for obtaining further guarantees.
Secondly, we establish the existence of exact EFX allocations in a different scenario, where the agents view the items as split into tiers w.r.t. their value, and they agree on which items belong to each tier. Overall, our results provide evidence that improved guarantees can still be possible by exploiting ordinal information of the valuations.
\end{abstract}

\section{Introduction}

Our work follows the ongoing line of research on fair division with indivisibilities. During the last decade, we have experienced a surge of interest in defining new fairness notions, tailored for allocating a set of indivisible items to a set of agents. This effort has been largely motivated by the realization that the more traditional solution concepts, such as envy-freeness or proportionality are too demanding and fail to exist for the discrete setting.

As a result, there are by now several relaxations that try to incorporate different fairness aspects.
In this work we are particularly interested in the relaxation referred to as EFX (envy-freeness up to any item), but we will also discuss and utilize results for a weaker variant referred to as EF1 (envy-freeness up to one item). The notion of an EFX allocation was defined in \cite{caragiannis2019unreasonable} and demands that an agent $i$ stops being envious of another agent $j$, if at most one item was to be removed from the bundle of agent $j$.   

Over the last years, the question of finding EFX allocations has become one of the most important open problems in fair division \cite{procaccia2020technical}. Despite the effort and interest of the community, we are still not aware if EFX allocations exist for instances with at least $4$ agents. Naturally, this also led to the study of approximation algorithms, which brought forward further insights. But unfortunately, even on this front, it is still unresolved to identify what is the best approximation guarantee that we can have. At the moment, the best known algorithm is by \cite{amanatidis2020multiple}, producing in worst case a $0.618$-EFX allocation, and ideally, it would be very desirable to improve this to a factor as close to 1 as possible.

Driven by these considerations and the difficulty of the problem in its general case, our main focus is to identify conditions under which we can obtain improved approximation guarantees or even exact EFX allocations. We are particularly motivated by a positive result established in \cite{plaut2018almost}, that EFX allocations exist and can be computed efficiently, when all the agents agree on how the items are ranked w.r.t. their value (the agents can have different additive valuation functions, only agreement on the ranking is required). Such instances can be justified as capturing scenarios where all the items are perceived in a similar way by all agents. Although it may seem too strict that the ranking should be exactly the same for all agents and over all the items, this is one of the few positive results that are known on the existence of EFX allocations, and to our surprise, no further progress has been reported if the condition is even minimally violated. 

\subsection{Contribution} We investigate the direction of obtaining further positive results for larger families of instances, by considering relaxations of the {\it common ranking} assumption of \cite{plaut2018almost}. To this end, we investigate two possible such weakenings. Our first main result (Section \ref{subsec:common-top}) is that for instances with $m$ items and $n$ agents, we provide an algorithm with a $2/3$-approximation guarantee, when the set of the top $n$ items is the same for every agent (they do not have to agree on their ranking for them, and we also impose no assumption on the remaining $m-n$ items). This improves the currently known 0.618-approximation for these instances and is particularly useful for cases with a small number of agents, where it can often be the case that they agree on the set of top-tier items. To prove our result, we study a general framework, introduced in \cite{farhadi2021almost}, that is of independent interest for obtaining approximation guarantees. We demonstrate (in Section \ref{subsec:apps}) that indeed, via this framework, we can reprove some already existing results in the literature and also attain further improvements, when agents have additional agreements on the ranking of items. This reveals a smooth degrading of the approximation factor as agents get to have fewer agreements on how they rank the items. Moving on, in Section \ref{sec:tiers}, we obtain a better guarantee by looking at a different relaxation of the common ranking assumption. Namely, we establish the existence and efficient computation of exact EFX allocations in the scenario where the agents view the items as grouped into tiers of size at most three w.r.t. their value, and they agree on the items that belong to each tier. Each tier can be thought of as a different quality/desirability level and the agents in this setting have a common perception on classifying the items to these levels. We note that our proof holds for the more general family of cancelable valuation functions, an interesting superclass of additive functions. Finally, this result also contributes to ruling out instances for constructing potential counterexamples to the non-existence of EFX allocations (which is an ongoing pursuit for the EFX concept).

\subsection{Related work}
The first direct relaxation of envy freeness tailored for indivisible items, was the notion of envy freeness up to 1 good (EF1). Formally it was introduced by Budish in \cite{budish2011combinatorial}, while implicitly considered also in \cite{lipton2004approximately}. Later on, Caragiannis et al. defined the stronger notion of envy freeness up to any good (EFX), in \cite{caragiannis2019unreasonable}. The first positive result in the pursuit of exact EFX allocations, was due to Plaut and Roughgarden, \cite{plaut2018almost}, where the existence of such allocations was proved in the case that all agents have identical valuation functions. 
Making progress even for a small number of agents turned out to be highly non trivial with the currently best result being the existence of EFX allocations for 3 agents with additive valuations, by Chaudhury et al., \cite{chaudhury2020efx}. 

Plaut and Roughgarden also investigated two other ways to attack the problem. The first one was a definition of a multiplicative approximation version of EFX, for which they provided a non-polynomial algorithm with a guarantee of $1/2$. The same ratio was later attained efficiently in \cite{10.5555/3367032.3367053}, and currently, the state of the art approximation is $\phi-1\approx 0.618$, in \cite{amanatidis2020multiple}. The second contribution of Plaut and Roughgarden was a polynomial time algorithm for additive valuations under the restricted setting where agents rank all the items in the same way. After this, exact EFX allocations have also been shown to exist in a variety of other special cases: for additive binary valuations in \cite{10.1007/978-3-030-58285-2_1}, for dichotomous preferences and submodular valuations in \cite{babaioff2021fair}, for two-valued instances and bounded value instances in \cite{amanatidis2021maximum}, and when all agents have one of two possible valuation functions in \cite{mahara2021extension}.

An alternative approach to the problem was initiated in \cite{caragiannis2019envy} where EFX allocations (with high Nash welfare) were shown to exist when some items may be unallocated; giving birth to EFX \textit{with charity}. Subsequent work by Chaudhury et al., \cite{chaudhury2021little}, proved the existence of EFX allocations with bounded charity, i.e., the number of donated items is no more than $n-1$. The bound was improved in \cite{mahara2021extension} to $n-2$, while approximate EFX allocations with sublinear charity was the focus of \cite{chaudhury2021improving}. In that work a connection between the size of the charity and a problem in extremal combinatorics was shown; the second problem was studied outside the context of Fair Division in \cite{berendsohn_et_al:LIPIcs.MFCS.2022.17}, \cite{alon2021divisible} and \cite{meszaros2021zero}. Finally, Berger et al., \cite{Berger_Cohen_Feldman_Fiat_2022}, proved that for the special case of 4 agents, an EFX allocation with at most one unallocated item exists, even for a broader class than the case of additive valuation functions. 

Apart from the notions of EF1 and EFX, there have been already several other fairness concepts that have been proposed, such as the notions of Maximin, Pairwise, and Groupwise Share fairness (MMS, PMMS and GMMS, introduced respectively in \cite{budish2011combinatorial}, \cite{caragiannis2019unreasonable} and \cite{barman2018groupwise}). For an overview of these notions, we refer the reader to \cite{ijcai2022p756}, \cite{aziz2022algorithmic} and the references therein.


\section{Model and Preliminaries}
We consider a set of agents, $N = \{1, \dots, n\}$ and a set $M$ of $m$ indivisible items. Each agent $i
\in N$ is associated with a valuation function $v_i$, which is assumed to be monotone and nonnegative (i.e., we do not allow for chores). 
An allocation $\aaa$ is any valid ordered partition of $M$ into $n$ subsets, $\aaa = (A_1, \dots, A_n)$, where $A_i$ is the bundle of agent $i$. The goal is to compute an allocation that satisfies some desirable fairness criteria.

\subsection{Basic definitions}

Our first set of results in Section \ref{sec:top-items} concerns {\it additive} valuation functions, which is a typical assumption made in the fair division literature. A valuation is additive if every agent $i$ associates a value $v_{ij}$ with each item $j \in M$, and the total value of $i$ for a subset $S\subseteq M$ is given by $v_i(S) = \sum_{j\in S} v_{ij}$. Later in Section \ref{sec:tiers}, we will also define and study generalizations to richer valuations. For ease of notation, we will use $g$ for a singleton set $\{g\}$, so that, e.g., $v_i(g) = v_i(\{g\})$.

Given an input profile, described by the valuation functions of the agents, an ideal solution is to allocate the items so that no one envies someone else's bundle.  
\begin{definition}[Envy freeness-EF]
An allocation $\aaa$ is envy free (EF) if for every pair of agents $i,j$, it holds that $v_i(A_i) \ge v_i(A_j)$.
\end{definition}

Envy freeness turns out to be too strict for indivisible items, and therefore several relaxations have been considered as alternative solutions. 
The first such relaxation is the notion of EF1, due to Budish, \cite{budish2011combinatorial}.

\begin{definition}[EF1]\label{def:ef1}
An allocation $\aaa$ is envy free up to one good (EF1) if for every pair of agents $i,j$, there exists a good $g\in A_j$, such that $v_i(A_i) \ge v_i(A_j \setminus g)$.
\end{definition}

The intuition behind EF1 allocations is that the agents cannot be too envious, in the sense that there always exists a single item whose removal can eliminate envy from one agent to another. This is an efficiently computable fairness notion, as demonstrated in the next subsection. Towards coming closer to envy-freeness, 
Caragiannis et al., \cite{caragiannis2019unreasonable}, defined a stronger notion (but still a weakening of EF), which is also the notion of interest for our work. The difference w.r.t. EF1 is the switch of the quantifiers, so that envy can be eliminated by the removal of any single item.

\begin{definition}[EFX]
An allocation $\aaa$ is envy free up to any good (EFX) if for every pair of agents $i,j$, and for every $g\in A_j$, it holds that $v_i(A_i) \ge v_i(A_j \setminus g)$.
\end{definition}

EFX allocations have turned out to be much harder to compute. As a result, approximate versions of EFX have also received considerable attention. Although there are multiple ways of defining an approximation notion, we will stick to the multiplicative version, as studied in previous works as well: 

\begin{definition}[$\alpha$-EFX]
An allocation $\aaa$ is $\alpha$-EFX, for $\alpha\in [0,1]$, if for every pair of agents $i,j$, and for every $g\in A_j$, it holds that $v_i(A_i) \ge \alpha \cdot v_i(A_j \setminus g)$.
\end{definition}

Hence, our goal is to obtain $
\alpha$-EFX allocations, with $\alpha$ as close to 1 as possible. Furthermore, in the same manner one can also define approximate versions for other concepts (e.g. $\alpha$-EF).
For further illustrations and examples on these concepts, we also refer the reader to \cite{ijcai2022p756}.

\subsection{Envy cycle elimination}

In the sequel we will use as a building block, a well-known algorithm, which we will refer to as Envy Cycle Elimination (ECE). Introduced in \cite{lipton2004approximately}, it computes an EF1 allocation in polynomial time, even for general valuations. 
The algorithm uses a graph-theoretic approach, where each agent is viewed as a node, and where envy from an agent $i$ towards an agent $j$ corresponds to the directed edge $(i, j)$. Hence, any allocation (not necessarily of the whole set of goods), can be represented by the corresponding directed graph, referred to as the envy graph of the allocation and denoted by $E_G$. The algorithm starts from the empty allocation and allocates one item per round. In each round, if there exists an unenvied agent $i$ in the current allocation, then she can receive the next item without violating the EF1 property. If no such agent exists, then any agent in $E_G$ has an incoming edge, meaning that the graph contains at least one directed cycle in the form $i_1 \to i_2 \to \dots \to i_1$. Reallocating the bundles backwards along the cycle, i.e., agent $j$ would receive $A_{j+1}$, removes some envy edges, and by repeating the process enough times, we can eliminate all cycles and leave some agent without an incoming edge. This would be the agent to receive the item of the current round (breaking ties arbitrarily if there are multiple such agents). The formal description of the algorithm follows.

\begin{algorithm}[H]
\caption{Envy Cycle Elimination$(\nn, \mm)$}\label{alg:ECE}
\begin{algorithmic}[1]
\State{Set $A_i = \emptyset$ for every agent $i$}
\While{$\exists$ some unallocated item $g$}
\State{Pick some source agent $s$ (guaranteed to exist, break ties arbitrarily) }
\State{Set $A_s = A_s \cup \{g\}$}
\State{Decycle the envy graph by repeatedly finding envy cycles and reallocating backwards the bundles along each cycle}\label{alg1:line5}
\EndWhile
\State\Return{$\aaa = (A_1, \dots, A_n)$} 
\end{algorithmic}
\end{algorithm}

\begin{theorem}[implied by \cite{lipton2004approximately}]
Algorithm \ref{alg:ECE} computes an EF1 allocation in polynomial time.
\end{theorem}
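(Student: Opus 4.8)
The plan is to prove the two parts of the statement separately: that the returned allocation is EF1, and that the procedure runs in polynomial time. For correctness I would maintain the invariant that at the end of every iteration of the \texttt{while} loop the current partial allocation is EF1. This holds vacuously at the start (the empty allocation), so it suffices to check that each of the two operations performed inside the loop preserves it. Consider first the reallocation along an envy cycle $C = (c_1 \to c_2 \to \dots \to c_k \to c_1)$ in Line~\ref{alg1:line5}: this merely permutes the current bundles, with $c_i$ receiving the bundle previously held by $c_{i+1}$ (indices mod $k$) and agents outside $C$ unaffected. Since an edge $c_i \to c_{i+1}$ means $v_{c_i}(A_{c_{i+1}}) > v_{c_i}(A_{c_i})$, every agent's value for her own bundle does not decrease (strictly increases for agents on $C$). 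Then for any ordered pair $(i,j)$, the bundle $j$ holds afterwards equals some bundle $B$ held before the swap; EF1 before the swap (applied to $i$ and whoever held $B$ then) yields a good $g\in B$ with $v_i(\text{old bundle of }i) \ge v_i(B\setminus g)$, and since $i$'s value for her own bundle only went up, $v_i(\text{new bundle of }i) \ge v_i(B\setminus g)$, i.e.\ EF1 still holds for $(i,j)$. For the second operation, handing the next item $g$ to a source $s$: $s$ is unenvied, so $v_i(A_i)\ge v_i(A_s) = v_i((A_s\cup g)\setminus g)$ for every $i\neq s$; pairs avoiding $s$ are untouched; and towards any $j$, monotonicity gives $v_s(A_s\cup g)\ge v_s(A_s)$, so $s$'s existing guarantee only improves. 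Hence the invariant survives, and since the loop exits only once all items are placed, the returned allocation is EF1.

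It remains to argue that Line~3 always finds a source and that the running time is polynomial. A source exists because, once the decycling of Line~\ref{alg1:line5} finishes, the envy graph is acyclic, and every DAG has a vertex of in-degree zero. For the running time, the key — and the step I expect to be the main obstacle — is to show that a single cycle reversal strictly decreases the number of edges of the envy graph. I would split the edges according to how many endpoints lie on the reversed cycle $C$. Edges with no endpoint on $C$ are untouched. Edges from a fixed outside agent $u$ into $C$ are only relabeled: $u\to c_i$ holds afterwards iff $u\to c_{i+1}$ held before (the bundle $c_i$ now holds is exactly the old bundle of $c_{i+1}$), a bijection, so their number is unchanged. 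Edges from $C$ to the outside can only vanish: if $c_i\to u$ survives then $v_{c_i}(A_{c_{i+1}}) < v_{c_i}(A_u)$, which together with $v_{c_i}(A_{c_i}) < v_{c_i}(A_{c_{i+1}})$ forces $c_i\to u$ to have existed already. Finally, within $C$, the number of out-edges of $c_i$ into $C$ equals $|\{w\in C : v_{c_i}(A_w) > v_{c_i}(A_{c_i})\}|$ before and $|\{w\in C : v_{c_i}(A_w) > v_{c_i}(A_{c_{i+1}})\}|$ after; since $v_{c_i}(A_{c_{i+1}}) > v_{c_i}(A_{c_i})$, the second set is contained in the first and omits $c_{i+1}$, hence is strictly smaller. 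Summing over $c_i\in C$, reversing $C$ destroys at least $|C|$ edges and creates none elsewhere, so the edge count strictly drops (in particular, decycling terminates, which also closes the gap above).

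Given this, the polynomial bound follows by routine accounting: there are $m$ rounds (one item added per round); within a round the envy graph starts with at most $n(n-1)$ edges and the count strictly decreases with each reversal, so at most $O(n^2)$ reversals occur per round; and detecting a source, detecting a cycle, and recomputing the envy graph after a reversal are all polynomial-time graph operations (e.g.\ via DFS). Hence the total running time is polynomial in $n$ and $m$. The only genuinely delicate point, as noted, is the edge-count argument — specifically, verifying that although reversing a cycle can create new envy edges among the cycle's own vertices, each within-cycle out-degree nevertheless drops by at least one, so the net effect is still a strict decrease.
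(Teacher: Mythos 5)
Your proof is correct and complete. The paper itself offers no proof of this theorem---it is stated as ``implied by'' the cited work of Lipton et al.\ and deferred to that reference---so there is nothing to compare against except the classical argument, which is exactly what you reconstruct: the EF1 invariant is preserved both by giving an item to an unenvied source and by rotating bundles along an envy cycle, and termination of the (repeated) decycling in each round follows from your edge-counting potential, where the only delicate case---new envy edges possibly appearing among the cycle's own vertices---is correctly handled by showing each cycle vertex's within-cycle out-degree still drops by at least one. This last point also covers the paper's stated variant in which \emph{all} cycles are eliminated each round (not just enough to expose one source), since the edge count strictly decreases with every reversal.
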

At this point, we should note that in the original version in \cite{lipton2004approximately}, it suffices to execute the decycling step in line 5, until we obtain at least one source agent (with no incoming edges), to maintain the EF1 property. For our purposes, it is convenient to eliminate all cycles, and maintain the invariant that $E_G$ is a DAG (Directed Acyclic Graph) at the end of each round.  We will discuss more about the source agents of $E_G$ in Section \ref{sec:tiers}.


\section{Agreement on top items}
\label{sec:top-items}

In this section, we will focus on additive valuations and on extending the positive result of Plaut and Roughgarden, \cite{plaut2018almost}, 
regarding the existence of EFX allocations, when the agents share a common ranking over the items. A natural path to follow is to study instances where this assumption is somewhat relaxed (e.g., truncated to apply only to a few items). Although our results do not 
guarantee existence of exact EFX allocations, we do obtain improved approximation guarantees compared to the state of the art for some of 
these instances, and our general aim is to study the tradeoff between the approximation ratio and the extent of departure from the common ranking assumption over all items. 

\subsection{General approximation framework}
Before addressing our main goal, we will first introduce a framework, originally presented in \cite{farhadi2021almost}, for producing approximation algorithms. This framework serves both as a unifying umbrella for some of the already existing results, but also helps us in establishing our improvements. We stress that from now on, we will assume that $m> n$, since otherwise, there exists a trivial exact 
EFX allocation, where each agent receives at most one item.

We start with a few observations to develop some intuition. Consider a variation of the ECE algorithm, where during the first $n$ rounds, every time we select a source agent to allocate the next item to, we actually let her pick her favorite one, among the unallocated items. Assume also, we select a distinct agent in each of these first $n$ rounds. The remainder of the ECE algorithm can run as presented in the previous section. Interestingly, Algorithm 2 of \cite{10.5555/3367032.3367053} achieves a $1/2$-EFX approximation by essentially doing this, even though its first phase may look unrelated to ECE\footnote{The algorithm of \cite{10.5555/3367032.3367053} was developed with a different fairness notion in mind and uses perfect matchings, with the first one corresponding to the right of choice in the first $n$ rounds of ECE.}. It is instructive to present the proof below under this viewpoint, for the sake of completeness, and for motivating the more general framework that we want to exploit in the sequel.

\begin{theorem}[Implied by \cite{10.5555/3367032.3367053}]
\label{thm:1/2}
The Envy Cycle Elimination algorithm computes a $1/2$-EFX allocation when in each of the first $n$ rounds, the source agent (selected in line 3) is granted the right to choose her favorite unallocated item.
\end{theorem}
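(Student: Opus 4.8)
The goal is to show that after this modified ECE run, the final allocation $\aaa$ is $1/2$-EFX. I would first recall that the unmodified ECE maintains the EF1 property and, in our version, keeps $E_G$ a DAG after every round. The key extra structure comes from the first $n$ rounds: since a distinct source agent is picked each round and is allowed to choose her favorite unallocated item, each agent $i$ acquires, in the round she is selected, an item $g_i$ that she values at least as much as any item allocated \emph{later} (because all later-allocated items were available, hence not preferred, at the time of her choice). I would call $g_i$ the ``first item'' of agent $i$ and note the invariant that $v_i(g_i) \ge v_i(h)$ for every item $h$ that enters any bundle after round $i$'s pick.

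\textbf{Tracking the invariant through cycle swaps and later rounds.} The main thing to check is that this domination property is preserved by the two operations ECE performs after the first $n$ rounds: adding a new item to a source agent's bundle, and rotating bundles along an envy cycle. Adding items is harmless for the invariant since it only concerns comparisons of $g_i$ against items that arrive later. For cycle rotations I would argue that bundles only move as whole units, so each agent's current bundle is some agent's bundle from before, and in particular every bundle $A_j$ in the final allocation contains \emph{some} agent's ``first item'' $g_{\sigma(j)}$ (the $n$ first-items are never discarded, just permuted among bundles together with whatever else accumulated). Actually the cleaner statement to carry is: at the end of every round $t \ge n$, each bundle $A_j$ can be written as $\{g\} \cup R$ where $g$ is the first-item chosen by whichever agent originally owned this bundle, and every item in $R$ was allocated in some round $> $ (that agent's pick round), hence is worth at most $v_i(g)$ to that original owner. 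I would then combine this with the standard ECE fact that an agent never envies her own earlier self after swaps (swaps only happen along cycles and weakly decrease envy / are done to a source).

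\textbf{Deriving the $1/2$ bound.} Fix agents $i,j$ in the final allocation and any $g \in A_j$; I want $v_i(A_i) \ge \tfrac12 v_i(A_j \setminus g)$. Here I would split on whether $g$ is the ``special'' first-item $g^\star$ sitting in $A_j$ or an ordinary item. If $g$ is the special item, then every remaining item in $A_j\setminus g$ was allocated after round $n$, i.e.\ after agent $i$'s own pick (which happened in one of the first $n$ rounds), so each such item is worth at most $v_i(g_i) \le v_i(A_i^{(i)}) \le v_i(A_i)$ to $i$ — but that bounds each item individually, not the sum, so this is not yet enough and I need the EF1 guarantee too. The right combination is: by EF1 there is \emph{some} item $e \in A_j$ with $v_i(A_i) \ge v_i(A_j \setminus e)$; and separately $v_i(A_i) \ge v_i(g)$ for the particular $g$ we removed, because $A_i$ contains $i$'s own first item $g_i$ which dominates everything allocated after round $n$, and any item that was among the first $n$ is in some agent's bundle where... — this is the delicate case. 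So then $v_i(A_j \setminus g) \le v_i(A_j \setminus e) + v_i(e) \le v_i(A_i) + v_i(g) \le 2 v_i(A_i)$ provided I can show $v_i(A_i) \ge v_i(g)$ for \emph{every} $g \in A_j$, i.e.\ that $A_i$ is worth at least $i$'s value for any single item held by anyone. That last fact is exactly where the ``favorite pick in the first $n$ rounds'' is used: $i$'s first item $g_i$ was her favorite among everything then unallocated, so $v_i(g_i) \ge v_i(h)$ for all $h$ allocated in rounds $\ge i$'s pick; for items allocated \emph{before} $i$'s pick one uses that those earlier source agents were also picking favorites and a short exchange argument, or more simply that at the moment $i$ picked, $i$ could have taken any not-yet-allocated item, and the already-allocated ones sit in bundles that only grow — here I'd lean on the DAG/non-envy structure to say $i$ does not envy those bundles by a factor more than needed.

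\textbf{Main obstacle.} The crux — and the step I expect to be most delicate — is establishing cleanly that in the final allocation $v_i(A_i) \ge v_i(g)$ for every single item $g$ held by any agent, despite the cycle rotations reshuffling bundles. The ``favorite pick'' property directly gives only $v_i(g_i) \ge v_i(h)$ for items $h$ that were unallocated when $i$ picked; extending this to items picked in earlier rounds, and making sure rotations don't destroy the bound (an agent could swap into a bundle she values less than her first item), requires carefully using that rotations move whole bundles to source agents and that the maintained DAG invariant prevents an agent from ending up worse off than with her own first item. Once that single-item domination is secured, combining it with the EF1 inequality via the two-term split above yields the factor $2$ immediately.
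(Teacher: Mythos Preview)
Your plan has a genuine gap, and it is exactly the step you flagged as the ``main obstacle'': the single-item domination claim $v_i(A_i)\ge v_i(g)$ for \emph{every} item $g$ in any bundle is simply false. Take two agents with identical additive values $v(a)=10$, $v(b)=v(c)=1$. In round~1 agent~1 (a source) picks $a$; in round~2 agent~2 (still a source, since her bundle is empty) picks $b$; then agent~2 envies agent~1, so agent~2 is the unique source and receives $c$. The final allocation is $A_1=\{a\}$, $A_2=\{b,c\}$, and $v_2(A_2)=2<10=v_2(a)$. So the inequality you want fails, and no amount of DAG bookkeeping will rescue it: items chosen in rounds \emph{before} $i$'s own pick are under no control whatsoever from $i$'s side. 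Your attempted extension ``for items allocated before $i$'s pick one uses that those earlier source agents were also picking favorites and a short exchange argument'' does not go through.

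The fix---and what the paper does---is to avoid the arbitrary EF1 witness $e$ altogether and work with the specific item $h$ that was \emph{last} added to $A_j$. This single choice gives both inequalities at once: (i) just before $h$ was added the owner of $A_j\setminus h$ was a source, so $v_i(A_i)\ge v_i(A_j\setminus h)$; and (ii) if $|A_j|\ge 2$ then $h$ was allocated after round $n$, hence after $i$ chose her favorite $g_i$, so $v_i(h)\le v_i(g_i)\le v_i(A_i)$. Adding yields $2v_i(A_i)\ge v_i(A_j)$, which is in fact $\tfrac12$-EF, stronger than $\tfrac12$-EFX. Note also that your worry about rotations making an agent ``swap into a bundle she values less than her first item'' is unfounded: in a cycle rotation each agent moves to a bundle she \emph{envied}, so her value strictly increases; the monotonicity $v_i(A_i)\ge v_i(g_i)$ is therefore immediate and needs no tracking of which first-item sits in which bundle.
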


\begin{proof}
Consider two agents $i$ and $j$ and let $A_i, A_j$, be the final bundles allocated to these agents at the end of the algorithm. We will prove that $i$ satisfies the 1/2-EFX condition w.r.t. $j$. Let $g_i$ be the first item that was allocated to agent $i$. Since the ECE algorithm never decreases the valuation of any agent during its rounds, we have that $v_i(A_i) \ge v_i(g_i)$. Also, let $h$ be the last item that was added to the bundle $A_j$ of agent $j$. We know that when $h$ was added, the owner of the bundle $A_j\setminus h$ at that time, was unenvied, thus $v_i(A_i) \ge v_i(A_j\setminus h)$. 
Moving on, if $A_j$ contains only $h$, then $i$ trivially satisfies the EFX condition w.r.t. $j$. Hence, suppose $|A_j|\geq 2$, which means that $h$ was picked after the $n$-th round.
Then, from the definition of $g_i$, it follows $v_i(h) \le v_i(g_i) \le v_i(A_i)$. Adding the two inequalities for $v_i(A_i)$, yields $2v_i(A_i) \ge v_i(A_j)$.
\end{proof}

We note that the previous result actually produces a $1/2$-EF allocation when $m> n$, hence, an even better guarantee than $1/2$-EFX. 
A crucial observation arising from this proof is that the more times an agent $i$ gets to pick an item before becoming envious of another agent, the better the approximation. In particular, we used in the proof of Theorem \ref{thm:1/2} the inequality $v_i(A_i) \ge v_i(h)$, but if $i$ had picked an item $k$ times during the execution of ECE, and before the addition of $h$ to the bundle $A_j$, we could replace this with: 
$v_i(A_i) \ge k\cdot v_i(h)$. And this would lead to a 
$\frac{k}{k+1}$-EFX approximation. 
Although it may not always be easy to enforce such a property, it still yields an approach for obtaining approximation guarantees for certain families of instances, as we shall see. In fact, all we need is to be able to produce first a partial EFX allocation of some items, that satisfies such a property for every agent. We can now formally put everything together and have the following scheme. 

\begin{algorithm}
\caption{General approximation framework}\label{alg:gaf}
\begin{algorithmic}[1]
\State For $\alpha, \beta >0$, compute a partial $\alpha$-EFX allocation $\mathcal{S} = (S_1,\dots S_n)$, with the property that $$ v_i(S_i) \ge \beta \cdot v_i(h) 
\text{ for all $i\in N$ and all } h \in M\setminus \bigcup_{j\in N} S_j$$
\State Continue with running the ECE algorithm, until there are no unallocated items 
\end{algorithmic}
\end{algorithm}

\begin{theorem}[\cite{farhadi2021almost}]
\label{thm:framework}
Algorithm \ref{alg:gaf} computes a $\min\left(\alpha, \frac{\beta}{\beta +1}\right)$-EFX allocation. Moreover, if the partial allocation $\mathcal{S}$ is $\gamma$-EF1 for some $\gamma\leq 1$, or if it can be computed efficiently, the same properties carry over for the final allocation as well. 
\end{theorem}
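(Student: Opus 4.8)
The plan is to analyse how bundles and valuations evolve during the ECE phase of Algorithm~\ref{alg:gaf}, relying on two facts about ECE. First, for every agent $i$ the value $v_i$ of the bundle currently held by $i$ is non-decreasing throughout the run: appending an item helps (only) its recipient by monotonicity, and a backward reallocation along an envy cycle strictly raises the value of every agent on that cycle. Second, whenever an item is appended it is appended to the bundle of a source agent, i.e.\ an agent that nobody envies at that moment. To be able to use the second fact from the very first round, I would first note that we may assume the envy graph of the starting allocation $\mathcal{S}$ is a DAG: running the decycling step on $\mathcal{S}$ only reallocates whole bundles and only increases each agent's value for her own bundle, so the resulting partial allocation is still $\alpha$-EFX, still $\gamma$-EF1 whenever $\mathcal{S}$ was, allocates the same set of items, and still satisfies $v_i(S_i)\ge \beta\, v_i(h)$ for every $i$ and every unallocated $h$; hence we may replace $\mathcal{S}$ by it, and thereafter the ECE phase keeps the envy graph a DAG at the end of each round, so a source always exists.

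The key structural observation is that, since reallocations move bundles as indivisible units and items are only ever added, every final bundle $A_j$ equals $S_k \cup T$ for exactly one of the starting bundles $S_k$, where $T$ is a (possibly empty) set of items each of which was unallocated at the start of the ECE phase. Now fix agents $i,j$ and a good $g\in A_j$. If $T=\emptyset$ then $A_j=S_k$, and since $\mathcal{S}$ is $\alpha$-EFX and $v_i(A_i)\ge v_i(S_i)$ we get $v_i(A_i)\ge v_i(S_i)\ge \alpha\, v_i(S_k\setminus g)=\alpha\, v_i(A_j\setminus g)$. If $T\neq\emptyset$, let $h$ be the last item appended to this bundle during the ECE phase; at that moment the bundle $A_j\setminus h$ was held by an unenvied agent, so $i$ did not envy it then, and by the non-decreasing property $v_i(A_i)\ge v_i(A_j\setminus h)$. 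If $g=h$ this already gives the EFX condition with factor $1$. If $g\neq h$, then $h\in A_j\setminus g$, and using additivity I would write $v_i(A_j\setminus g)\le v_i(A_j\setminus h)+v_i(h)$, bound $v_i(A_j\setminus h)\le v_i(A_i)$ as above, and bound $v_i(h)\le \tfrac1\beta v_i(S_i)\le \tfrac1\beta v_i(A_i)$ using that $h$ was unallocated in $\mathcal{S}$; summing gives $v_i(A_j\setminus g)\le \tfrac{\beta+1}{\beta}\,v_i(A_i)$. Taking the worst case over the three subcases yields $v_i(A_i)\ge \min\!\big(\alpha,\tfrac{\beta}{\beta+1}\big) v_i(A_j\setminus g)$, as claimed.

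For the second part of the statement, the $\gamma$-EF1 guarantee follows from the same case split by naming one good per ordered pair: when $A_j=S_k$, the good that witnesses $\gamma$-EF1 of $\mathcal{S}$ still works because $v_i(A_i)\ge v_i(S_i)$; and when $T\neq\emptyset$, the good $g=h$ works because $v_i(A_i)\ge v_i(A_j\setminus h)\ge \gamma\, v_i(A_j\setminus h)$ as $\gamma\le 1$. The efficiency claim is immediate: the ECE phase allocates one item per round and each decycling terminates, so it runs in polynomial time, and hence if $\mathcal{S}$ is computable in polynomial time so is all of Algorithm~\ref{alg:gaf}. The step I expect to require the most care is the bookkeeping around the reallocation swaps rather than any single inequality: one must justify that the final bundle of agent $j$ descends from a unique starting bundle with a well-defined last-appended good $h$, and that at the instant $h$ was appended the rest of that bundle was held by an unenvied agent, so that the non-decreasing-value property can legitimately be invoked for agent $i$ even though $i$'s own bundle changes owner during the process.
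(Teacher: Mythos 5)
Your proposal is correct and follows essentially the same route as the paper's proof: split on whether $A_j$ is an original bundle of $\mathcal{S}$ or a strict superset, use the fact that the last-added item $h$ went to an unenvied owner together with the monotonicity of each agent's value under ECE, and combine $v_i(A_i)\ge v_i(A_j\setminus h)$ with $v_i(A_i)\ge\beta\,v_i(h)$ to get the $\frac{\beta}{\beta+1}$ factor. Your write-up is somewhat more careful on the bookkeeping (decycling $\mathcal{S}$ first, the unique ancestor bundle, and the EF1/efficiency claims that the paper only asserts), but these are refinements of the same argument rather than a different one.
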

\noindent For the sake of completeness (and since the proof of Theorem 3 is not provided in \cite{farhadi2021almost}), we include a proof below.

\begin{proof}
Let $\aaa= (A_1,\dots, A_n)$ denote the final output of the algorithm and fix an agent $i$. As a first case, consider an agent $j$, who receives as her final bundle, 
one of the bundles of the allocation $\mathcal{S}$, say $S_j$. Then, we know that $v_i(A_i) \geq v_i(S_i)$, and since $\mathcal{S}$ was an $\alpha$-EFX allocation, agent $i$ satisfies the $\alpha$-EFX condition towards agent $j$. 
For the second case, consider an agent $j$, so that $A_j$ is a 
strict superset of some bundle of $\mathcal{S}$. Let $h$ be the last item added to $A_j$ by the algorithm. This means that the owner of the bundle $A_j\setminus h$, right before $h$ was allocated, was unenvied. 
Hence, if $B_i$ was the bundle owned by $i$ at that time, we have that 
$v_i(B_i) \ge v_i(A_j\setminus h)$.  
Moreover, we also know by the algorithm construction that
$v_i(B_i) \ge v_i(S_i) \ge \beta\cdot v_i(h)$. 
Multiplying the first inequality by $\beta$ and adding it to the second yields 
$$(\beta+1)v_i(B_i) \geq \beta v_i(A_j) \Rightarrow v_i(B_i) \ge \frac{\beta}{\beta +1} v_i(A_j)$$ 
The same inequality holds for the bundle $A_i$ too, since the ECE algorithm never makes an agent worse. Thus, $i$ satisfies the 
$\frac{\beta}{\beta+1}$-EF condition, which implies the same for EFX as well. Combining the two cases completes the proof.  
\end{proof}

For an immediate illustration, the proof of Theorem \ref{thm:1/2} corresponds to applying the framework with $\alpha=1$ and $\beta=1$. We note also that all the results presented in the next subsections, satisfy EF1 exactly (Theorem \ref{thm:framework} applies with $\gamma=1$), and we will therefore focus only on the achieved EFX approximation.

\subsection{An algorithm under a common top-$n$ set}
\label{subsec:common-top}

Suppose that the agents have additive valuations, and they agree on which are the $n$ most favorable items, without necessarily agreeing on their ranking w.r.t. their value\footnote{Existence of possible ties with the remaining $m-n$ items does not affect our result.}. 
A simple thought to achieve a $2/3$ approximation is by trying to apply the framework of the previous subsection, with $\alpha = 2/3$ and $\beta=2$. Before presenting our algorithm, we provide first some initial ideas and intuition. To achieve $\beta=2$ for a partial allocation, it is natural to consider allocating first 2 items per agent. This could be done by allocating the first $n$ items in a round robin fashion, and then using the reverse order of round robin for agents to select a second item. 
Each agent will indeed prefer each of her two items to any unallocated one, thus we can guarantee that $\beta = 2$ in the conditions of Theorem \ref{thm:framework}. Unfortunately though, this partial allocation may not be $2/3$-EFX and hence we cannot use our framework for analyzing it. A violation of this property means that for some pair of agents $i,j$, it holds $v_i(S_i) < 2/3\cdot v_i(S_j \setminus g)$. Our algorithm is built on exploiting that information, and on constructing a more careful partial allocation; some agents will receive only one item (they will be content with one item from the top-$n$ set) whereas the remaining ones will receive two items (one of which will again be from the top-$n$ set). 

\begin{algorithm}
\caption{2/3 EFX for identical top-$n$ set}\label{alg:efx23set}
\begin{algorithmic}[1]
\State{Let $M = T \cup B$, where $T$ is the set of the $n$ most valuable items and $B$ consists of the remaining $m-n$ items}
\For{each agent $i$}
\State{$h_i = \arg\max\limits_{m \in T} v_i(m)$}\label{alg3:line3}
\State{$g_{1,i} = \arg\min\limits_{m \in T} v_i(m)$}\label{alg3:line4}
\State{$g_{2,i} = \arg\max\limits_{m \in B} v_i(m)$}\label{alg3:line5}
\If{$v_i(g_{1,i}) + v_i(g_{2,i}) \ge \frac{2}{3}\cdot v_i(h_i)$}\label{alg3:line6}
\State{$A_i = \{g_{2,i}\}$ (non-content agent)}\label{alg3:line7}
\State{$B = B \setminus g_{2,i}$}\label{alg3:line8}
\Else
\State{$A_i = \{h_i\}$ (content agent)}\label{alg3:line10}
\State{$T = T \setminus h_i$}\label{alg3:line11}
\EndIf
\EndFor
\State{For every non-content agent, allocate to her one item arbitrarily from $T$}\label{alg3:line14}
\State{Continue with running the ECE algorithm} 
\end{algorithmic}
\end{algorithm}

\begin{theorem}\label{th:23set}
Algorithm \ref{alg:efx23set} efficiently computes a $2/3$-EFX allocation, when the agents agree upon the set of top $n$ items.
\end{theorem}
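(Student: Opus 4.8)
The plan is to invoke the general framework of Theorem~\ref{thm:framework} with $\alpha = 2/3$ and $\beta = 2$, so the bulk of the work is to verify that the partial allocation $\mathcal{S}$ produced by lines~1--14 of Algorithm~\ref{alg:efx23set} is (i) $2/3$-EFX, (ii) satisfies $v_i(S_i) \ge 2 \cdot v_i(h)$ for every agent $i$ and every item $h$ left unallocated after line~14, and (iii) is EF1 exactly and computable in polynomial time (so that the ``moreover'' clause of Theorem~\ref{thm:framework} carries the final allocation to $2/3$-EFX as well). Efficiency and EF1 should be essentially immediate: the loop does $O(n)$ work with simple $\arg\max$/$\arg\min$ computations, each agent holds at most two items with at least one from the (globally agreed) top set, and since $m > n$ every bundle is nonempty. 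So the heart of the argument is (i) and (ii).

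For (ii), fix an agent $i$ and an item $h$ that is still unallocated after line~14. Every such $h$ lies in $B$ (the set $T$ is fully exhausted: $n$ items for $n$ agents, each agent taking exactly one $T$-item either as a ``content'' $h_i$ in line~11 or via line~14). A content agent $i$ holds $S_i = \{h_i\}$ with $v_i(h_i) = \max_{m\in T} v_i(m)$; since $h\in B$ and the $T$-items are the globally top $n$, we get $v_i(h_i) \ge v_i(g_{1,i}) \ge v_i(h)$ --- wait, that only gives $\beta = 1$. The point is that for a \emph{content} agent line~6 \emph{failed}, i.e. $v_i(g_{1,i}) + v_i(g_{2,i}) < \tfrac23 v_i(h_i)$; combined with $v_i(h_i)\ge v_i(g_{1,i})$ this forces $v_i(h_i)$ to be much larger than $v_i(g_{2,i}) = \max_{m\in B} v_i(m) \ge v_i(h)$. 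Concretely $v_i(g_{2,i}) < \tfrac23 v_i(h_i) - v_i(g_{1,i})$, and one has to chase the constants to conclude $v_i(h_i) \ge 2 v_i(h)$; I expect $v_i(h_i) > 3 v_i(g_{2,i})$ actually (if $g_{1,i}$ is the cheapest $T$-item and still $\ge g_{2,i}$, then $2 v_i(g_{2,i}) < \tfrac23 v_i(h_i)$, giving $v_i(h_i) > 3 v_i(g_{2,i}) \ge 3 v_i(h) \ge 2 v_i(h)$). For a \emph{non-content} agent $i$, $S_i$ consists of $g_{2,i}\in B$ together with one item $t\in T$ grabbed in line~14; then $v_i(S_i) = v_i(t) + v_i(g_{2,i}) \ge v_i(g_{1,i}) + v_i(g_{2,i}) \ge \tfrac23 v_i(h_i) \ge \tfrac23 v_i(h) $? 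That is not yet $2v_i(h)$ --- the extra leverage must come from the fact that $h\in B$ was \emph{not} picked as anyone's $g_{2,\cdot}$ and in particular $v_i(g_{2,i}) \ge v_i(h)$ while $v_i(t)\ge v_i(g_{1,i})$, and $g_{1,i}$ being a top-$n$ item satisfies $v_i(g_{1,i}) \ge v_i(h)$; so $v_i(S_i) \ge v_i(h) + v_i(h) = 2v_i(h)$. Good --- that is the clean version of the bound for non-content agents, and the content-agent case is handled by the constant-chasing above.

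For (i), I must show $\mathcal{S}$ is $2/3$-EFX: for every ordered pair $(i,j)$ and every $g\in S_j$, $v_i(S_i) \ge \tfrac23 v_i(S_j\setminus g)$. Split on the type of $j$. If $j$ is content, $S_j = \{h_j\}$ is a singleton, so $S_j\setminus g = \emptyset$ and the condition is vacuous. If $j$ is non-content, $S_j = \{t_j, g_{2,j}\}$ with $t_j\in T$, $g_{2,j}\in B$; removing either item leaves a single item, so we need $v_i(S_i) \ge \tfrac23 v_i(t_j)$ and $v_i(S_i) \ge \tfrac23 v_i(g_{2,j})$. The second is easy: $g_{2,j}\in B$ and (as in (ii)) every surviving/allocated item gives $i$ value $\ge$ some $B$-item or $T$-item bound, in particular $v_i(S_i) \ge v_i(h_i) \ge v_i(g_{2,j})$ when $i$ is content, and $v_i(S_i)\ge v_i(g_{2,i}) \ge v_i(g_{2,j})$ requires care since $g_{2,j}$ might out-rank $i$'s own $B$-choices --- here I use that $g_{2,j}$ was $j$'s favorite in $B$ at that point, but for $i$ it is just some $B$-item, and $v_i(S_i)$ contains a $T$-item $t$ with $v_i(t) \ge v_i(g_{1,i}) \ge v_i(g_{2,j})$ because $T$-items are globally the top $n$. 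The first, $v_i(S_i) \ge \tfrac23 v_i(t_j)$, is the most delicate sub-case and is exactly where the threshold in line~6 earns its keep: if $i$ is non-content then $v_i(S_i) \ge v_i(g_{1,i}) + v_i(g_{2,i})$, and since line~6 succeeded for $i$ this is $\ge \tfrac23 v_i(h_i) \ge \tfrac23 v_i(t_j)$ (as $h_i$ is $i$'s top $T$-item); if $i$ is content then $v_i(S_i) = v_i(h_i) \ge v_i(t_j) \ge \tfrac23 v_i(t_j)$ trivially, because both are $T$-items and $h_i$ maximizes. So every sub-case reduces to a one- or two-line inequality, with the only non-trivial constant-chasing being the content-agent case of (ii). I expect \textbf{that constant-chasing step --- showing a content agent's single item dominates twice every leftover item --- to be the one place to be careful about which $\arg\min$/$\arg\max$ is being compared}, and about the edge case where $B$ becomes empty so that some $g_{2,i}$ is undefined (which should only strengthen the relevant inequalities, or can be ruled out since $m > n$ leaves at least one $B$-item until the last non-content agent, and is handled by noting no leftover $h$ then exists).
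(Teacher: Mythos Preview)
Your proposal is correct and follows essentially the same approach as the paper: both invoke the framework of Theorem~\ref{thm:framework} with $\alpha=2/3$ and $\beta=2$, and both verify the two required properties of the partial allocation $\mathcal{S}$ by the same case split into content versus non-content agents, with the same key inequalities (in particular, $2v_i(g_{2,i}) \le v_i(g_{1,i})+v_i(g_{2,i}) < \tfrac{2}{3}v_i(h_i)$ for content agents, and $v_i(S_i)\ge v_i(g_{1,i})+v_i(g_{2,i})\ge \tfrac{2}{3}v_i(h_i)$ for non-content agents). The only minor differences are presentational: the paper notes that a content agent in fact satisfies \emph{exact} EFX toward everyone (not merely $2/3$-EFX), and it does not explicitly discuss the edge case of $B$ becoming empty that you raise, though your observation that $M'=\emptyset$ in that event is the right way to dispose of it.
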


\begin{proof}
We will discern two cases based on whether an agent was content or not, as determined by the condition in line \ref{alg3:line6}. Before proceeding, note that all the items in $T$ will eventually get allocated. Every content agent gets one item from $T$ in line \ref{alg3:line10}. And then, there will be as many left-over items of $T$ as the number of non-content agents, and each of them receives one such item in line \ref{alg3:line14}. 

To prove our result using the framework of Theorem \ref{thm:framework}, we will establish the desired properties for $\alpha=2/3$ and $\beta = 2$. To be more precise, let $S = (S_1,\dots, S_n)$ be the partial allocation produced up until the execution of line \ref{alg3:line14} of the algorithm, and let $M'$ denote the set of unallocated items at that time. By Theorem \ref{thm:framework}, it suffices to establish that right before the execution of the ECE algorithm, $\mathcal{S}$ is a $2/3$-EFX allocation, and that also $v_i(S_i) \geq 2 v_i(m)$ for any $m\in M'$ and every $i\in N$. 

Fix an agent $i$ and for simplicity in the analysis, we will drop the subscript $i$ and use $h$ instead of $h_i$ and so on. We have the following two cases. 

\noindent {\bf Case 1:} $v_i(g_1) + v_i(g_2) < \frac{2}{3}\cdot v_i(h)$ (agent $i$ is content).
    
\noindent We have that agent $i$ receives exactly one item in the allocation $\mathcal{S}$, $S_i = \{h\}$, thus every other agent satisfies the EFX condition towards her. At the same time, $i$ also satisfies the EFX condition towards agents with one item. Hence consider an agent $j$ having obtained two items. We know that $i$ has received her favorite remaining item from $T$ in line \ref{alg3:line10}, therefore she cannot envy the item of $T$ that was received by $j$ in line \ref{alg3:line14}. Agent $i$ also cannot envy the other item of $j$ since it belongs to $B$ and is not more valuable than the items of $T$. In conclusion, agent $i$ satisfies the EFX condition towards all other agents under allocation $\mathcal{S}$. Moreover, it holds that
    \begin{align*}
        v_i(g_1) + v_i(g_2) &\ge 2\cdot v_i(g_2) \implies
        v_i(g_2) \le \frac{1}{3}\cdot v_i(h)\\
        \intertext{And by the definition of $g_2$}
        v_i(h) &\ge 3\cdot v_i(m)\ \forall m \in M'
    \end{align*}
    
    Thus all conditions of Theorem \ref{thm:framework} that concern agent $i$ are satisfied.

\noindent {\bf Case 2:} $v_i(g_1) + v_i(g_2) \ge \frac{2}{3}\cdot v_i(h)$ (non-content agent $i$).
    
\noindent Now agent $i$ will receive initially the item $g_2$ from $B$, and later, in line \ref{alg3:line14}, she will receive an item from $T$, which we will denote by $h'$. Since $g_1$ was the least valued top item for $i$, it follows that $v_i(h') \ge v_i(g_1)$. Therefore, the bundle of $i$, $S_i = \{h', g_2\}$, is at least as valuable as $\{g_1, g_2\}$, and thus $v_i(S_i) \geq \frac{2}{3}\cdot v_i(h)$. Similar to Case 1, it suffices to check the envy of $i$ towards any agent $j$ receiving two items. As in Case 1, the item $h$ has to be more valuable than any of the two items received by $j$. Thus, for any $g\in S_j$,
     
     $$ v_i(S_j \setminus g) \le v_i(h) \implies v_i(S_i) \ge \frac{2}{3} \cdot v_i(S_j \setminus g)$$
     meaning that $i$ satisfies $2/3$-EFX towards $j$. Finally, 
     $$v_i(S_i) \ge 2\cdot v_i(g_2) \ge 2\cdot v_i(m)\ \forall m \in M'$$
     Thus the conditions of Theorem \ref{thm:framework} are again satisfied.

Putting everything together, the partial allocation after line \ref{alg3:line14} is at worst $2/3$-EFX. Also, for every agent $i$ and every unallocated good $m$, it holds that $v_i(A_i) \ge 2\cdot v_i(m)$, and the proof is completed.
\end{proof}
\subsection{Other applications of the framework}
\label{subsec:apps}
We close this section by presenting some more straightforward applications of our framework based on other classes of restricted scenarios, generalizing some results from the existing literature. 

\begin{corollary}[Relaxed top ranking]\label{corrank}
Assume that for some $\ell \leq m$, all agents agree not only on the set of the top $\ell$ items, but also on their order w.r.t. their value. Then one can efficiently compute a $\frac{k}{k+1}$-EFX allocation, with $k = \lfloor \ell/n \rfloor$.
\end{corollary}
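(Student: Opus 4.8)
The plan is to invoke Theorem \ref{thm:framework} with $\alpha = 1$ and $\beta = k = \lfloor \ell/n \rfloor$, so it suffices to construct a partial allocation $\mathcal{S} = (S_1, \dots, S_n)$ that is exactly EFX (i.e.\ $1$-EFX), satisfies $v_i(S_i) \ge k \cdot v_i(m)$ for every agent $i$ and every unallocated item $m$, and is moreover EF1 (so that the efficiency and $\gamma$-EF1 clauses of the theorem give an efficient algorithm). The natural candidate is to allocate only items from the commonly-agreed top-$\ell$ set $T$, giving each agent exactly $k$ items from $T$ via a round-robin procedure, and leaving the remaining $\ell - kn$ items of $T$ (together with all of $M \setminus T$) unallocated for the ECE phase to handle.

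First I would set up the round-robin: order the agents $1, \dots, n$ and run $k$ rounds where in each round every agent, in turn, picks her favorite still-available item from $T$. Since all agents agree on the ranking of the items in $T$, after round $r$ the $rn$ items taken are exactly the top $rn$ items of the common order, and agent $i$'s $r$-th pick is the $((r-1)n + i)$-th best item overall. The key structural facts I would extract from the common ranking are: (i) every item an agent receives is ranked (in the common order, hence valued by that agent, by additivity and the common ranking being consistent with each $v_i$) at least as high as any item still in $T$ and strictly higher than — or at least as high as — any item outside $T$; hence $v_i(g) \ge v_i(m)$ for each of agent $i$'s $k$ items $g$ and each unallocated $m$, which immediately yields $v_i(S_i) \ge k \cdot v_i(m)$, giving $\beta = k$. (ii) For the EFX property, I would use the standard round-robin argument: comparing agent $i$ to agent $j$, pair up $i$'s $t$-th pick with $j$'s $t$-th pick; if $i$ precedes $j$ in the ordering then $i$'s $t$-th pick weakly dominates $j$'s $t$-th pick for every $t$, so $i$ does not envy $j$ at all; if $i$ follows $j$, then $i$'s $t$-th pick weakly dominates $j$'s $(t+1)$-th pick for $t = 1, \dots, k-1$, so removing $j$'s first (most valued) item from $S_j$ leaves something $i$ values at most $v_i(S_i)$ — this gives EF1, and because all of agent $i$'s $k$ items individually outvalue every item in $S_j$, removing \emph{any} single item from $S_j$ suffices, yielding EFX. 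The case $k = 0$ (when $\ell < n$) is vacuous: the partial allocation is empty and the framework degrades to the plain $\frac{k}{k+1} = 0$ bound, which is trivially true, though one would typically state the corollary for $\ell \ge n$.

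The main obstacle — though it is more of a bookkeeping subtlety than a genuine difficulty — is verifying the EFX comparison for the ``$i$ follows $j$'' case cleanly, in particular checking that after the first-round offset the leftover items of $T$ (the $\ell - kn$ items not picked by anyone) and the items of $M \setminus T$ do not interfere: an agent receiving a leftover top item or a bottom item during the ECE phase is handled entirely by Theorem \ref{thm:framework}, so I only need the partial allocation $\mathcal{S}$ itself to be EFX and to respect the $\beta$-bound against \emph{all} unallocated items, which is exactly fact (i) above. A second minor point is to confirm that the common ranking assumption truly transfers to each $v_i$ the inequalities I use — this is immediate since ``agreeing on the order of the top $\ell$ items w.r.t.\ value'' means precisely that for any two items $a, b$ among the top $\ell$, $v_i(a) \ge v_i(b)$ for all $i$ or $v_i(a) \le v_i(b)$ for all $i$, consistently — so all the weak dominance steps go through simultaneously for every agent. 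Once these are in place, Theorem \ref{thm:framework} with $\alpha = 1$, $\beta = k$, $\gamma = 1$ gives the claimed efficient $\frac{k}{k+1}$-EFX allocation.
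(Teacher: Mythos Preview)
Your argument has a genuine gap: the round-robin partial allocation is \emph{not} EFX, only EF1, even under a common ranking. The sentence ``because all of agent $i$'s $k$ items individually outvalue every item in $S_j$'' is false when $i$ follows $j$: agent $j$ picks before $i$ in every round, so $j$'s first pick is ranked above every item in $S_i$. Concretely, take $n=2$, $k=2$, four items with common ranking $g_1 \succ g_2 \succ g_3 \succ g_4$ and $v_2(g_1)=10$, $v_2(g_2)=v_2(g_3)=v_2(g_4)=1$. Round-robin gives $S_1=\{g_1,g_3\}$, $S_2=\{g_2,g_4\}$; removing $g_3$ from $S_1$ leaves $\{g_1\}$ and $v_2(S_2)=2 < 10 = v_2(g_1)$, so $\mathcal{S}$ is not even $2/3$-EFX. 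Since Theorem~\ref{thm:framework} outputs a $\min(\alpha,\beta/(\beta+1))$-EFX allocation, a bad $\alpha$ ruins the bound.

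The paper avoids this by invoking the algorithm of \cite{plaut2018almost} (which is essentially ECE with items inserted in decreasing common-rank order) to allocate \emph{all} $\ell$ top items; that algorithm is guaranteed to be exactly EFX under a common ranking, so $\alpha=1$ is immediate. The subtlety the paper then handles is that bundle sizes need not be equal: if $|S_i|<k$, pigeonhole gives some $|S_j|>k$, and the EFX inequality $v_i(S_i)\ge v_i(S_j\setminus g)$ with $|S_j\setminus g|\ge k$ yields $v_i(S_i)\ge k\cdot v_i(m)$ for every unallocated $m$, securing $\beta=k$. Your $\beta$-argument is fine; it is the choice of partial allocation (round-robin rather than an EFX-producing procedure) that breaks the proof.
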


\begin{proof}
Given $\ell$, we know that a partial EFX allocation for the top $\ell$ items can be computed efficiently by \cite{plaut2018almost}, due to the identical ranking assumption for them. Hence, we can achieve $\alpha=1$ in our framework. It suffices to establish Theorem \ref{thm:framework}, with $\beta=k= \lfloor \ell/n \rfloor$. 
Let $\mathcal{S} = (S_1,\dots, S_n)$ be the allocation of the top $\ell $ items. Note that for each agent with a bundle of size no less than $k$, the relation we want follows directly due to additivity. 
Assuming that some agent $i$ has less than $k$ items, then by the pigeonhole principle, some other agent $j$ has more than $k$ items. Since the allocation $\mathcal{S}$ is EFX, $v_i(S_i) \ge v_i(S_j\setminus g)$ for any $g\in S_j$. But now, the bundle in the right hand side is of size at least $k$ and again the relation we want regarding the unallocated items follows.  
\end{proof}

The above result captures the trade-off between the achievable approximation ratio and the degree of agreement on the ranking of the top items: for every additional group of $n$ items that the agents agree upon, the approximation ratio improves. 
Also, it demonstrates that different applications of the framework for the same setting can yield suboptimal results; for $k=1 (\ell = n)$, the above corollary yields a $1/2$ ratio, whereas we know that we can get a better approximation by the previous subsection, using different values for $\alpha$ and $\beta$.

The next corollary combines the approximation framework with an algorithm presented in \cite{amanatidis2021maximum}, that computes an exact EFX allocation when all agents value all items within a bounded interval of the form $[x, 2x]$. We can obtain the following generalization of \cite{amanatidis2021maximum}.

\begin{corollary}[Relaxed bounded interval]
Assuming that all agents value their top $\ell$ items within an interval of the form $[x, 2x]$, then one can efficiently compute a $\frac{k}{k+1}$-EFX allocation with $k = \lfloor \ell/n \rfloor$.
\end{corollary}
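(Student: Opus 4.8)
The plan is to instantiate the general framework of Theorem~\ref{thm:framework} in exactly the same way as in Corollary~\ref{corrank}, the sole change being which prior result supplies the initial exact‑EFX partial allocation. I read the hypothesis as: there is a common set $T$ of the $\ell$ items that are the top $\ell$ for \emph{every} agent, and within $T$ each agent's values lie in an interval of the form $[x,2x]$. Restricting every $v_i$ to the sub‑instance on the item set $T$ preserves this ratio hypothesis, so the algorithm of \cite{amanatidis2021maximum} runs on it in polynomial time and returns an exact EFX allocation $\mathcal{S}=(S_1,\dots,S_n)$ of all the items of $T$. Thus, in the notation of Theorem~\ref{thm:framework}, $\alpha=1$, the allocation is computed efficiently, and being exact EFX it is in particular EF1; so it remains only to verify the framework's second requirement with $\beta=k:=\lfloor \ell/n\rfloor$, namely $v_i(S_i)\ge k\cdot v_i(m)$ for every agent $i$ and every unallocated item $m$ (the unallocated items being exactly those of $M\setminus T$). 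We may assume $k\ge 1$, since otherwise the claimed $\frac{k}{k+1}$-EFX guarantee is vacuous.

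For this I would first isolate the only structural fact needed: since $T$ is the top‑$\ell$ set for $i$ and $m\notin T$, every item of $T$ is worth at least $v_i(m)$ to $i$. Now fix $i$. If $|S_i|\ge k$, additivity immediately gives $v_i(S_i)=\sum_{g\in S_i}v_i(g)\ge |S_i|\cdot v_i(m)\ge k\cdot v_i(m)$. If instead $|S_i|\le k-1$, then since the bundles of $\mathcal{S}$ partition the $\ell$ items of $T$ and $nk\le\ell$, some agent $j$ must satisfy $|S_j|\ge k+1$: otherwise every bundle would have size at most $k$, whence $\ell=\sum_{j'}|S_{j'}|\le (k-1)+(n-1)k = nk-1 < nk \le \ell$, a contradiction. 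Applying the exact EFX property of $\mathcal{S}$ to the pair $(i,j)$, for every $g\in S_j$ we get $v_i(S_i)\ge v_i(S_j\setminus g)$, and $S_j\setminus g$ consists of at least $k$ items of $T$, each worth at least $v_i(m)$ to $i$; hence $v_i(S_i)\ge k\cdot v_i(m)$, as desired.

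With both conditions of Theorem~\ref{thm:framework} in hand for $\alpha=1$, $\beta=k$ and $\gamma=1$, Algorithm~\ref{alg:gaf} outputs a $\min\bigl(1,\frac{k}{k+1}\bigr)=\frac{k}{k+1}$-EFX allocation in polynomial time, which is the claim. I do not anticipate a genuine obstacle here: the argument runs parallel to that of Corollary~\ref{corrank}, with \cite{amanatidis2021maximum} replacing \cite{plaut2018almost} as the black box that produces an exact partial EFX allocation on $T$, and the only mildly nontrivial piece is the pigeonhole step, which relies on $nk\le\ell$ and is essentially the same as in Corollary~\ref{corrank}. The one point worth an explicit sentence is that the bounded‑ratio assumption is, by its statement, already about the set $T$, so it transfers verbatim to the sub‑instance restricted to $T$ and \cite{amanatidis2021maximum} applies as a black box.
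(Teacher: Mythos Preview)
Your proposal is correct and follows essentially the same approach as the paper: the paper's proof literally states that it is identical to that of Corollary~\ref{corrank} except that $\mathcal{S}$ is produced by the Modified Round Robin algorithm of \cite{amanatidis2021maximum}, which is precisely what you do. Your write-up is in fact more careful than the paper's---you spell out the pigeonhole inequality $\ell\ge nk>(k-1)+(n-1)k$, handle the trivial $k=0$ case, and make explicit the (necessary) reading that the top-$\ell$ set $T$ is common to all agents---but the underlying argument is the same.
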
\label{corbound}
\begin{proof}
The proof is identical with that of Corollary \ref{corrank} except that $\mathcal{S}$ is constructed via the Modified Round Robin algorithm of \cite{amanatidis2021maximum}.
\end{proof}

Finally, we also present a result in the opposite direction, where we assume full disagreement among the agents on top items. Our framework can help us in improving the state of the art and obtain a $2/3$-approximation in the following case.
 
\begin{corollary}[Distinct top items]
\label{cor:distinct}
Assuming that each agent $i$ has a different favorite good $f_i$, then a $2/3$-EFX allocation can be computed efficiently.
\end{corollary}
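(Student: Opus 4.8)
The plan is to instantiate the framework of Theorem~\ref{thm:framework} with $\alpha = 1$ and $\beta = 2$, so that Algorithm~\ref{alg:gaf} outputs a $\min\!\left(1, \tfrac{2}{3}\right) = \tfrac{2}{3}$-EFX allocation. To produce the required partial allocation $\mathcal{S}$, I would first hand each agent $i$ her unique favorite good $f_i$ (so $v_i(f_i) = \max_{x\in M} v_i(x)$); since the $f_i$ are pairwise distinct by hypothesis, this is a valid assignment of $n$ distinct items, one per agent. Then I would run a second pass over the agents, in reverse order say, in which each agent in turn appends to her bundle her most valuable item among those still unallocated, stopping if and when the items run out. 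Call the resulting partial allocation $\mathcal{S} = (S_1,\dots,S_n)$; every bundle $S_i$ contains $f_i$ and has size at most two.

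The first thing to check is that $\mathcal{S}$ is exactly EFX, i.e.\ $\alpha = 1$ works. This is immediate: for any agents $i,j$ and any $g \in S_j$, the set $S_j \setminus g$ contains at most one item, so $v_i(S_j\setminus g) \le \max_{x\in M} v_i(x) = v_i(f_i) \le v_i(S_i)$. (In particular $\mathcal{S}$ is also EF1, so the efficiency/EF1 addendum of Theorem~\ref{thm:framework} is available.) Next I would verify the condition $v_i(S_i) \ge 2\,v_i(h)$ for every agent $i$ and every item $h$ unallocated by $\mathcal{S}$. If some agent $i$ received only the single item $f_i$, this is because no items remained when her turn came in the second pass, which means all of $M$ had already been handed out; hence in that case there is no unallocated $h$ and the condition is vacuous. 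Otherwise $S_i = \{f_i, s_i\}$, where $s_i$ is the item $i$ picked in the second pass; any $h$ still unallocated at the end was in particular available at the moment $i$ picked $s_i$, so $v_i(s_i) \ge v_i(h)$, and together with $v_i(f_i) \ge v_i(h)$ we get $v_i(S_i) = v_i(f_i)+v_i(s_i) \ge 2\,v_i(h)$.

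With both conditions in hand, Theorem~\ref{thm:framework} guarantees that Algorithm~\ref{alg:gaf}, fed with this $\mathcal{S}$, returns a $2/3$-EFX allocation, and since $\mathcal{S}$ is built in polynomial time and ECE runs in polynomial time, the whole procedure is efficient. There is essentially no hard step here; the only point requiring a little care is the bookkeeping when $n < m < 2n$, where not every agent gets a second item, and the observation above (a missing second item forces $M$ to be fully allocated, so the $\beta$-condition is vacuous for that agent) is exactly what handles it.
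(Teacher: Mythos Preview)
Your argument is correct and follows essentially the same route as the paper: build a partial allocation by giving every agent her (distinct) favorite item and then one more greedily chosen item, observe that each $S_j\setminus g$ is a singleton so $\mathcal{S}$ is exactly EFX ($\alpha=1$), and use that both items in $S_i$ dominate any leftover item to get $\beta=2$, then invoke Theorem~\ref{thm:framework}. The only cosmetic difference is that the paper hands out the second item via $n$ rounds of ECE (letting the current source pick), whereas you use a fixed reverse-order pass; since after the first pass every agent already holds her global favorite and hence envies no one, any order works and the analysis is identical. Your explicit treatment of the $n<m<2n$ case (the $\beta$-condition becomes vacuous once $M$ is exhausted) is a nice bit of bookkeeping that the paper leaves implicit.
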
\label{cordis}

\begin{proof}
We construct a partial EFX allocation $\mathcal{S}$ as follows: we first let each agent get her favorite item, $f_i$, and then run $n$ rounds of the ECE algorithm, to allocate $n$ more items, but as in Theorem \ref{thm:1/2}, we grant each agent the right to choose her favorite unallocated item, say $g_i$, when selected as the unenvied agent. Therefore, in the produced partial allocation $\mathcal{S}$, we have that $S_i = \{f_i, g_i\}$. Then
$$ v_i(S_i) \ge v_i(f_i) \ge v_i(S_j\setminus g) ~~\forall g\in S_j$$ where the last inequality is due to the fact that $S_j \setminus g$ is a singleton set and $f_i$ is the best item according to $i$. Hence, $\mathcal{S}$ is a partial EFX allocation ($\alpha=1$). Proving that Theorem \ref{thm:framework} is satisfied with $\beta=2$ is identical to Case 2 in the proof of Theorem \ref{th:23set}.
\end{proof}


\section{Tiered rankings and further improvements}
\label{sec:tiers}
Our second main result concerns the existence of exact EFX allocations. Following the spirit of the previous results, one can ask if we can prove even better results, when the agreement between agents goes on beyond the first tier of most valuable items. Keeping in mind that we are also interested in relaxations of the common ranking assumption over all items, studied in \cite{plaut2018almost}, a natural approach is to consider scenarios where the goods can be split into tiers w.r.t. their value. This leads to the following definition.
\begin{definition}\label{deftier}
An instance of our problem has a common \textit{tiered ranking} among all agents, if there exists an ordered partition of all items $M = (M_1, M_2, \dots, M_\ell)$, such that for every agent $i$,
$$ \forall g\in M_k, \forall h \in M_{j > k}:\ v_i(g) \ge v_i(h)$$
Moreover, we define the size of the tiered ranking to be the size of the largest tier, i.e., $\max_{j\in [\ell]} |M_j|$.
\end{definition}

Under the prism of tiers, the identical ranking setting can be viewed as a tiered ranking of size 1. We will study such instances, under a more general family of valuation functions, defined below.

\begin{definition}[Definition 2.1 in \cite{Berger_Cohen_Feldman_Fiat_2022}]\label{defcan}
A valuation function $v$ is cancelable if for any bundles $S,T \subset M$, and item $g \in M \setminus (S\cup T)$, it holds that $$ v(S\cup g) > v(T \cup g) \implies v(S) > v(T)$$
\end{definition}

Clearly, cancelable valuations include additive ones and are monotone (as we have no chores). Moreover, they include other well known classes, such as multiplicative and unit-demand valuations (see \cite{Berger_Cohen_Feldman_Fiat_2022}). So far, there are several results for additive valuations that have been shown to carry over to cancelable functions as well, making this class a natural generalization of additivity (this is not applicable for our results in Section \ref{sec:top-items} however). In the following theorem, we assume at least 3 agents, since the existence for 2 agents is already well known. 

\begin{theorem}
\label{thm:tiers}
Assuming $n\ge 3$, and that the agents have cancelable valuations with a common tiered ranking of size at most 3,
then an EFX allocation exists and can be computed efficiently.
\end{theorem}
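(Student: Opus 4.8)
The plan is to build the EFX allocation tier by tier, maintaining at every stage a partial allocation that is EFX and whose envy graph is a DAG, and crucially, exploiting the fact that each tier has at most $3$ items to keep the combinatorics manageable. The base observation is that when a tier has exactly $n$ items, we can hand them out via a round of ``choose your favorite from the current tier'' to source agents of the envy graph (as in Theorem~\ref{thm:1/2}), and since all items of the next tiers are weakly less valuable than any allocated top-tier item for every agent, the EFX property is preserved. The difficulty is that tiers need not have size $n$: a tier can have fewer items than agents (so some agents get nothing from it) or the total number of items is not a multiple of anything nice. So first I would reduce to the case where we process one tier at a time, and within each tier distinguish whether $|M_k|$ is $1$, $2$, or $3$ relative to $n$, adding the items one at a time to source agents in the envy graph while decycling after each addition to keep a DAG; the monotonicity across tiers (every item in later tiers is worth at most any item already placed) is what guarantees that giving a small low-tier item to an agent can never break anyone's EFX condition toward that agent's growing bundle --- except possibly the very first item that agent ever receives.

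The real content, and the place where ``size at most $3$'' is used, is handling the endgame: once we reach the last tier (or the last few items), some agents may still be holding empty bundles or single-item bundles, and we can no longer rely on ``every future item is tiny.'' Here I expect to need a careful case analysis in the spirit of the $3$-agent EFX proof of \cite{chaudhury2020efx} and the cancelable-valuations machinery of \cite{Berger_Cohen_Feldman_Fiat_2022}: with at most $3$ items left in the final tier, there are only a bounded number of ways to distribute them among the source agents, and for each configuration I would verify the EFX condition directly, using the cancelable property in the form $v(S\cup g) > v(T\cup g) \implies v(S) > v(T)$ to compare bundles that differ by a common item. A useful reformulation is that cancelable valuations satisfy $v(S) \ge v(T) \implies v(S \cup g) \ge v(T \cup g)$, which lets envy comparisons be ``transported'' when the same item is appended to both sides; this is exactly what is needed when reallocating bundles backwards along an envy cycle and when arguing that an unenvied agent stays unenvied after receiving a small item.

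The key steps, in order, are: (i) set up the invariant (partial EFX allocation, envy graph is a DAG, and every yet-unallocated item is weakly dominated for every agent by every item of every already-processed tier); (ii) show that adding one item from the current tier to a source agent, followed by full decycling, preserves the invariant, handling separately the first item an agent receives versus subsequent items --- for the first item, EFX toward that agent is automatic since a singleton bundle minus one item is empty; (iii) iterate through all tiers except possibly the last; (iv) do the bounded endgame case analysis for the final tier of size $\le 3$, distinguishing how many agents are still ``light'' (have $\le 1$ item) and enumerating the $O(1)$ ways the remaining $\le 3$ items can land, invoking cancelability to close each case. I expect step (iv) to be the main obstacle: the DAG structure restricts which agents can receive items, but one must rule out the scenario where, say, two source agents each need an item but only one ``harmless'' item remains, and argue that even then the larger item can be safely given because the recipient already has a bundle dominating everyone's envy. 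If some configuration genuinely fails, the fallback is to not allocate greedily within the last tier but instead to first place its items on agents chosen so that the resulting single-item or two-item bundles are EFX-compatible, possibly swapping which light agent gets which of the $\le 3$ items --- a finite search that terminates because the tier is so small.

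Should the direct endgame analysis prove too delicate, an alternative route for step (iv) is to merge the last two tiers (total size $\le 6$) and invoke an existing exact-EFX result for a bounded number of items together with the cancelable-valuation extension, or to reduce to the case of $3$ agents handled by \cite{chaudhury2020efx} by contracting all but three of the relevant agents; I would try the self-contained enumeration first since it is most likely to yield the clean ``efficiently computable'' conclusion the theorem claims.
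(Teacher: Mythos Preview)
Your high-level scaffolding (process tiers one at a time, keep the envy graph a DAG, exploit that every unallocated item is weakly dominated by items from earlier tiers) matches the paper. But step (ii) contains a genuine gap that undermines step (iii) and mislocates the difficulty into step (iv).

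The claim in (ii) that ``adding one item from the current tier to a source agent, followed by full decycling, preserves the invariant'' is correct only when every item already in that source's bundle comes from a \emph{strictly earlier} tier. Concretely, suppose the current tier has items $a,b,c$ and the DAG has a single source $s$. You give $a$ to $s$; EFX survives because $a$ is dominated by every item of $A_s$. After decycling, $s$ may still be the unique source (no one need envy $A_s\cup a$), so you must give $b$ to $s$ as well. Now test EFX toward $s$ by removing $a$: you need $v_i(A_i)\ge v_i(A_s\cup b)$ for all $i$, but all you know is $v_i(A_i)\ge v_i(A_s\cup a)$, and since $a,b$ lie in the \emph{same} tier there is no ordering between them. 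Cancelability gives you nothing here. The same failure occurs if decycling moves the bundle containing $a$ to whoever becomes the next source. So ``hand items to sources one at a time'' does not preserve EFX within a tier, and this is not an endgame phenomenon: it can happen at \emph{every} tier with fewer than three sources.

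This is exactly where the paper spends its effort. The induction is on tiers, but the inductive step is not ``three applications of ECE''; it is a case split on whether the current DAG has $\ge 3$, exactly $2$, or exactly $1$ source. With $\ge 3$ sources one does match the three items to three sources (your easy case). With one or two sources the paper carries out a fairly intricate analysis: identifying ``candidate new sources'' at topological level $1$ (or in the appropriate reachability components when there are two sources), deliberately creating and resolving envy cycles so that a source acquires a level-$1$ bundle that only she envied, and sometimes giving two items of the tier to a single agent after arguing via Lemma~\ref{lem:ineq} that no strong envy results. None of this is captured by your step (iv), which only contemplates a finite enumeration at the \emph{last} tier; the enumeration has to be redone at every tier, and it is driven by the source structure of the envy graph rather than by which agents are ``light.'' Your fallback ideas (merging the last two tiers, reducing to three agents) do not help for the same reason: the obstruction is per tier, not terminal.
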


We stress that although the assumption of tiers with up to three items may look like a simple relexation of the common ranking assumption over all items, the proof turns out to be much more involved, than the existence result of \cite{plaut2018almost}. To proceed, we state the following simple lemma on cancelable valuations, that we will use.
\begin{lemma}\label{lem:ineq}
Let $S,T,Q \text{ and }R$ be sets such that $S \cap Q =\emptyset$, and $T \cap R =\emptyset$. Then, for a cancelable valuation function $v$,
$$\begin{rcases}v(S) \ge v(T)\\ v(Q) \ge v(R) \end{rcases} \implies v(S \cup Q) \ge v(T \cup R)$$
\end{lemma}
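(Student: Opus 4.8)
The plan is to prove Lemma~\ref{lem:ineq} by reducing it to the single-item version of cancelability, applied repeatedly, after first disposing of the degenerate cases. I would begin by noting that if $Q \setminus R$ (or symmetrically a suitable reformulation) is handled by passing to set differences: since $v$ is monotone, we may as well assume $T \subseteq$ nothing in particular, so the cleanest route is to induct on $|Q \cup R|$, peeling off one item at a time from the symmetric difference of $Q$ and $R$, while keeping the hypothesis $v(S) \ge v(T)$ fixed and updating $v(Q) \ge v(R)$ along the way. The base case $Q = R$ is immediate from $v(S) \ge v(T)$ together with monotonicity (adding the same disjoint set $Q$ to both sides preserves $\ge$ — this itself needs a one-line argument from cancelability, or can be taken as a known property of cancelable/monotone valuations).

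For the inductive step, the key manoeuvre is: suppose $v(Q) \ge v(R)$ with $Q \ne R$. I want to find an item $g$ I can cancel. If there is an item $g \in Q \setminus R$, consider $R' = R$ and $Q' = Q \setminus g$; I would like to say $v(Q') \ge v(R)$ is not guaranteed, so instead I should add $g$ to the $R$ side: look at $v(Q) \ge v(R)$ and rewrite the larger side. The actual trick that cancelability supports is the contrapositive: from $v(S) \ge v(T)$ I cannot directly add items, but from $v(S \cup g) > v(T \cup g) \Rightarrow v(S) > v(T)$, contrapositively $v(S) \le v(T) \Rightarrow v(S \cup g) \le v(T \cup g)$, i.e. cancelable valuations also satisfy the "monotone under common additions" property in the weak-inequality form (by applying the strict version to tie-broken perturbations, or more carefully by a direct argument). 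So the genuine content is: $v(S) \ge v(T)$ and $v(Q) \ge v(R)$, with all the disjointness, implies $v(S \cup Q) \ge v(T \cup R)$. I would establish this by the chain $v(S \cup Q) \ge v(T \cup Q) \ge v(T \cup R)$, where the first inequality uses $v(S) \ge v(T)$ plus "adding the common disjoint set $Q$", and the second uses $v(Q) \ge v(R)$ plus "adding the common disjoint set $T$" — each of these two reductions is an induction on the size of the added set, with the single-item case being exactly (the weak-inequality form of) Definition~\ref{defcan}.

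So concretely the steps, in order: (1) prove the helper claim that for cancelable $v$, if $v(X) \ge v(Y)$ and $Z$ is disjoint from both $X$ and $Y$, then $v(X \cup Z) \ge v(Y \cup Z)$ — by induction on $|Z|$, base case $|Z| = 1$ being the contrapositive of Definition~\ref{defcan} (handling the non-strict inequality carefully, e.g. noting $v(X\cup g) \le v(Y \cup g)$ whenever $v(X) \le v(Y)$, which is the direct contrapositive and suffices since we only need $\le$ in the reverse direction, then swap roles); (2) apply the helper with $X = S$, $Y = T$, $Z = Q$ to get $v(S \cup Q) \ge v(T \cup Q)$; (3) apply the helper with $X = Q$, $Y = R$, $Z = T$ to get $v(T \cup Q) = v(Q \cup T) \ge v(R \cup T) = v(T \cup R)$; (4) chain (2) and (3). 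I would double-check that the disjointness hypotheses $S \cap Q = \emptyset$ and $T \cap R = \emptyset$ are exactly what is needed for step (2) and step (3) respectively (step (2) needs $Z=Q$ disjoint from $X=S$ — given — and from $Y=T$; step (3) needs $Z=T$ disjoint from $X=Q$ and from $Y=R$ — the latter given).

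The main obstacle I anticipate is the helper claim's base case: Definition~\ref{defcan} is stated only with strict inequalities, so I must be careful deriving the weak-inequality "cancellation/monotonicity" statement. The clean fix is to use the contrapositive directly: Definition~\ref{defcan} says $v(S\cup g) > v(T \cup g) \Rightarrow v(S) > v(T)$; contrapositively, $v(S) \le v(T) \Rightarrow v(S \cup g) \le v(T \cup g)$. Then to get "$v(X) \ge v(Y) \Rightarrow v(X \cup g) \ge v(Y \cup g)$" I apply this contrapositive with the roles of $S,T$ set to $Y,X$: from $v(Y) \le v(X)$ conclude $v(Y \cup g) \le v(X \cup g)$, which is exactly what I want. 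So in fact no perturbation argument is needed — just a careful reading of the contrapositive — and the lemma follows cleanly. A secondary point to verify is that some of the sets may be empty (e.g. $S = \emptyset$), but monotonicity and the conventions $v(\emptyset) \ge 0$ make all cases go through without special handling.
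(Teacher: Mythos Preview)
Your approach is essentially the paper's: take the contrapositive of Definition~\ref{defcan} to get the weak-inequality ``add a common item'' property, iterate it over a set, and then chain $v(S\cup Q)\ge v(T\cup Q)\ge v(T\cup R)$.

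There is one loose end you flag but do not close. Your helper claim requires $Z$ to be disjoint from \emph{both} $X$ and $Y$; in step~(2) you need $Q\cap T=\emptyset$ and in step~(3) you need $T\cap Q=\emptyset$, and neither is among the hypotheses of the lemma (only $S\cap Q=\emptyset$ and $T\cap R=\emptyset$ are). The paper deals with exactly this point: instead of adding all of $Q$ to both sides, it adds only $Q\setminus T$ (which is certainly disjoint from $T$, and is disjoint from $S$ since $Q$ is), obtaining
\[
v(T\cup Q)=v\bigl(T\cup(Q\setminus T)\bigr)\le v\bigl(S\cup(Q\setminus T)\bigr)\le v(S\cup Q),
\]
where the last step is monotonicity. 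The symmetric argument with $T\setminus Q$ handles step~(3). With this small patch your proof and the paper's coincide.
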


\begin{proof}
In the definition of cancelable valuations, it is easy to see that one direction implies the opposite: $v(T) \le v(S) \implies v(T \cup g) \le v(S\cup g)$. Applying this for every $g \in Q \setminus T$ gives
 $$v(T\cup Q) = v(T\cup (Q \setminus T)) \le v(S\cup (Q \setminus T)) \le v(S \cup Q)$$ where the last inequality is due to monotonicity. Similarly we obtain $v(Q \cup T) \ge v(R \cup T)$, and the lemma follows.

\end{proof}

\begin{proof}[Proof of Theorem 4.3]
Under our assumption, we can partition the items into tiers so that all agents have preferences in the following form: 
$$ M_1 \succeq M_2 \succeq ... \succeq M_s$$
where each $M_i$ is a set of items with $|M_i|\leq 3$, and the notation $M_i \succeq M_j$ means that all the agents find the items of $M_i$ at least as valuable as those of $M_j$. We note that within a tier, it is not necessary that all agents rank the items in the same way. Also, some of the tiers may only contain just one or two items. 

To prove the theorem, we need to establish that for any ordered pair of agents $(i, j)$, agent $i$ satisfies the EFX condition towards $j$. For convenience in the analysis, whenever this is violated, we will say that agent $i$ {\it strongly envies} agent $j$. 

We will prove the theorem by induction on the number of tiers. We present the proof here only for the case where each tier has exactly 3 items, since the other cases are easier. Note that the basis of our induction, allocating the first triplet, is easy: just give a single item to three different agents, and this is trivially EFX. For the inductive step, assume that we have a partial EFX allocation after allocating all the goods up to some tier $k$, and we will need to see how to allocate the three items of tier $k+1$. We will denote from now on the three items of tier $k+1$ by $a, b, c$. 

Let $E_G$ be the envy graph of the allocation that has resulted from the induction hypothesis for the items of the first $k$ tiers. We can assume that we can maintain $E_G$ from tier to tier as a DAG. If it is not, we can always remove all the envy cycles prior to continuing with the next tier, by running the decycling step (line \ref{alg1:line5}) of the ECE algorithm. Hence, $E_G$ has at least one source. We discern three cases based on the number of sources in $E_G$, and discuss them in order of difficulty.
\smallskip

\noindent \textbf{Case 1:} $E_G$ has at least three source agents.

\noindent This is the easiest case since we have three items and we can pick three sources (i.e., unenvied agents) and just do a matching. The EFX property is maintained.

On a high level, both remaining cases work as follows: the agents closest (in the sense of topological distance) to the sources are possible new sources. Therefore, if our current sources outvalue them or maybe get their bundles via some envy cycle elimination, we can allocate the remaining items. Otherwise, some current source will receive more than one items without violating the EFX property. 
We introduce also some extra notation. We will often refer to agents based on their level when we view $E_G$ with the following topological ordering: sources are at level 0, the agents envied by some source and with no envy within them at level 1, and so on (at each level, incoming edges come only from previous levels). Let also $(A_1,\dots, A_n)$ be the current EFX allocation. 

\smallskip 
\noindent \textbf{Case 2:} $E_G$ has one source agent $s_1$.
    
\noindent Let $o_1, o_2, \dots, o_\ell$ be the agents of level 1 ordered based on $s_1$'s valuation: $v_{s_1}(A_{o_1}) \le v_{s_1}(A_{o_2}) \le \dots \le v_{s_1}(A_{o_\ell})$.
    
\noindent $\bullet$ \underline{Subcase 2a:} $s_1$ can receive multiple items without breaking EFX.
        
\noindent If $s_1$ can receive all three goods of the tier without violating the EFX property we are done. If not, but she still can receive two goods, say $a$ and $b$, we allocate them to her. It remains to allocate $c$. If $v_{s_1}(A_{s_1} \cup a \cup b) \ge v_{s_1}(A_{o_1})$, we allocate $c$ to $o_1$ and complete the inductive step. Indeed, note that $c$ is the least valuable item of the bundle that $o_1$ has now. Since $s_1$ was not envying $o_1$ after she received $a$ and $b$, the EFX property will not be violated by giving $c$ to $o_1$. There is no other pair of agents that we need to check since no one else is allocated any items.  
Suppose now that $v_{s_1}(A_{s_1} \cup a \cup b) < v_{s_1}(A_{o_1})$. We also know that since $s_1$ could not receive all 3 items, some agent, say $x$, must envy her. The fact that $s_1$ was the single source of $E_G$ means that $x$ is reachable from $s_1$ via an envy path. 
Therefore, the allocation of $a$ and $b$ to $s_1$ has created an envy cycle $s_1 \to o_i \to \dots \to x \to s_1$, for some agent $o_i$ of level 1. After decycling the graph (using line \ref{alg1:line5} of the ECE algorithm), $s_1$ will be in possession of the bundle $A_{o_i}$. Since previously she was the only one envying the bundle, we can now allocate $c$ to her without disrupting EFX.
        
\noindent $\bullet$ \underline{Subcase 2b:} $s_1$ cannot receive multiple items.

\noindent If allocating 2 items is problematic for the EFX property, we deduce that after giving, say $a$, to $s_1$, some agent $x$ becomes envious of her. Firstly, we will identify possible new sources. Those are the $o_i$ agents, and more specifically $o_1$ and $o_2$, and agents envied only by $o_1$ (and maybe $s_1$), which we denote by $t_1, t_2,\dots$, and so on. The nodes of interest are shown in Figure \ref{fig:subcase2b}.
        
    \begin{figure}[h] 
	\centering
    \begin{tikzpicture}[
node distance = 12mm and 9mm,
     C/.style = {circle, draw, thick, fill=lightblue,
                 minimum size = 8mm,
                 label=#1},
every edge/.style = {draw, -Straight Barb},
                        ]
\node[font=\large] (n1) [C=below:]  {$s_1$};
\node[font=\large] (n2) [C=below:, right=of n1]  {$o_1$};
\node[font=\large] (n3) [C=below:, above=of n2]  {$o_2$};
\node[font=\large] (n4) [C=below:, right=of n2]  {$t_1$};
\path   (n1) edge   (n2)
        (n2) edge   (n4);
\path   (n1) edge   (n3);
\path   (n1) edge[dashed, bend left] (n4);
    \end{tikzpicture}
    \caption{Subcase 2b}
    \label{fig:subcase2b}
    \end{figure}
    Similarly to the previous case, if $v_{s_1}(A_{s_1}\cup a) \ge v_{s_1}(A_{o_2})$, we allocate one item to both $o_1$ and $o_2$ and we are done. Otherwise, we continue in the same spirit as before. Suppose first that $x$ is reachable from $o_r$, for some $r\ge 2$. Then, given that $v_{s_1}(A_{s_1}\cup a) < v_{s_1}(A_{o_2}) \leq v_{s_1}(A_{o_r})$, an envy cycle $s_1 \to o_r \to \dots \to x \to s_1$ is created. After decycling it, we allocate one item to $o_1$ and one to the current owner of $o_2$ (it could be either $o_2$ or $s_1$), and we are done. Suppose now that $r=1$. We have that $x$ could be some agent only reachable from $o_1$ or she could be $o_1$ herself, and we need to look at these two further cases separately.
    
    \noindent When $r=1$, and $x$ can be some agent other than $o_1$, the role of the $t_i$ nodes becomes more clear. So far, the possible new sources were always $o_1$ and $o_2$. Now they are $o_1$ and some node $t_i$. To proceed, note that there may not be a path from $s_1$ to $x$ after allocating item $a$ to $s_1$, but we will reallocate the bundles as if there was; checking that the EFX property is maintained is easy. Also, we pick $x$ as the agent furthest away from $s_1$. The image looks as follows:
        
        \begin{figure}[h] 
	\centering
    \begin{tikzpicture}[
node distance = 12mm and 9mm,
     C/.style = {circle, draw, thick, fill=lightblue,
                 minimum size = 8mm,
                 label=#1},
every edge/.style = {draw, -Straight Barb},
                        ]
\node[font=\large] (n1) [C=below:$A_{o_1}$]  {$s_1$};
\node[font=\large] (n2) [C=below:$A_{t_i}$, right=of n1]  {$o_1$};
\node[font=\large] (n3) [C=below:$A_{s_1} \cup a$, right=of n2]  {$x$};

\path   (n1) edge   (n2);
\path   (n1) edge[dashed, bend left] (n3);
    \end{tikzpicture}
    \caption{Subcase 2b with $r=1$ and $x\neq o_1$}
    \label{fig:subcase2bi}
    \end{figure}
    
Now $s_1$ with her new bundle is again a source (single one if $x$ is not a source) and if any node other than $x$ is envious of $A_{o_1} \cup b$ or $A_{o_1} \cup c$, the next source will be either $o_1$ (owning $A_{t_i}$) or $o_i, i\ge 2$ and we are done. The same applies if $s_1$ can stop envying some node $t_i$ after she receives a good due to $t_i$ becoming the final source or if $s_i$ will receive $A_{t_i}$ after some envy cycle elimination. If neither is true and $s_1$ cannot receive both of the remaining items without violating the EFX property, $x$ must be a source and some agent reachable only from her envies $s_1$ after she receives a good. In this scenario, if one of the two matchings between the two sources and the two items produce an EFX allocation, we have completed this case. Otherwise, there is a cycle containing both sources and nodes from one or both connected components (Lemma \ref{lem:ineq} guarantees that no agent between $s_1$ and $x$ can strongly envy the other one) if we substitute $a$ with $b$ or $c$. Now, $s_1$ is in possession of some $A_{t_i}$ and she can have the last item.

When $r=1$, and $x$ is $o_1$, we have only one source and one possible new source. However, apart from $o_1$, no other agent would strongly envy $A_{s_1} \cup a \cup b$ (or any other combination of $A_{s_1}$ with a pair of goods from the given tier); otherwise we would be back to $x$ being different that $o_1$. Therefore, we will compensate the lack of possible sources by possibly adding two items to one bundle. We start by asking $s_1$ to choose between receiving her favorite item or $o_1$'s favorite, thus creating an envy cycle of size 2 and causing a swap. Note however, that if $s_1$ and $o_1$ have a different favorite item, then $s_1$ will always choose to swap since she may well get her favorite item right after, and Lemma \ref{lem:ineq} guarantees the optimality of the choice. In any case, and with $a$ the favorite item of $o_1$, the owner of $A_{o_1}$ will be the new source. If she can get both of the remaining items, the proof is completed. Otherwise some agent envies $A_{o_1} \cup b$ or $A_{o_1} \cup c$. Since we have a single source, that envious agent is reachable so we ask our source to choose between her favorite item or the one that forms the envy cycle implied above. In either case, the new source will be eligible to receive the last item even if her bundle is $A_{s_1} \cup a$. 

\smallskip

\noindent\textbf{Case 3:} $E_G$ has two source agents, $s_1$ and $s_2$.
     
\noindent Now, we have more sources than Case 2, but also a harder time identifying the possible third one to receive an item. To bypass this problem we partition the envy graph $E_G$ in the following manner:
$$ E_G = s_1 \cup s_2 \cup V_1 \cup V_2 \cup V_{12}$$
where $V_1$ (resp. $V_2$) is the set of nodes reachable \textit{only} from $s_1$ (resp. $s_2$) via an envy path and $V_{12}$ is the set of nodes reachable from both sources. Since $E_G$ is a DAG, the same follows for the graphs induced by $V_1, V_2$ and $V_{12}$ as well. If $s'_1$ is a source of the $V_1$ DAG we have that $s_1$ is the only agent envious of her; otherwise it would be reachable from $s_2$ thus contradicting the definition of the partition. Therefore, $s'_1$ is a possible new source substituting $s_1$ and, symmetrically, a source $s'_2$ of $V_2$ is a candidate substitution for $s_2$. A source $s'_{12}$ of $V_{12}$ may be a possible replacement for both.
Now, if $s_1$ can receive two items or one and simultaneously stop envying $s'_1$, we easily allocate all the items of the tier. Therefore, we assume this is not the case and denote by $e_1$ the agent envious of $s_1$, after receiving one item of the current tier, and respectively, $e_2$ for $s_2$. We will do some case analysis based on which connected component, $e_1$ and $e_2$ belong to. Since they may not be unique, we define $T_1$ to be the set of agents envying $s_1$ (and respectively $T_2$ for $s_2$). Fortunately, due to symmetry, the number of different subcases is small.

\noindent$\bullet$\underline{Subcase 3a:} $T_1 \cap V_1 \neq \emptyset$ (or resp. $T_2 \cap V_2 \neq \emptyset$).
         
\noindent Let $e_1 \in T_1 \cap V_1$. In that case $e_1$ is reachable from $s_1$ and we have an envy cycle. After applying a decycling step, $s_1$ will own $A_{s'_1}$, a bundle of which she previously was the only envious agent. Therefore, we can allocate one more item to $s_1$ and the other to $s_2$.
         
\noindent$\bullet$\underline{Subcase 3b:} $T_1 \cap (s_2 \cup V_2) \neq \emptyset$ (or resp. $T_2 \cap (s_1 \cup V_1) \neq \emptyset$).
         
\noindent Let $e_1 \in s_2 \cap V_2$. Now, $e_1$ is not reachable from $s_1$. However, the same must apply to $s_2$ and $e_2$, otherwise we are back to subcase 3a . In other words, $e_2 \in E_G \setminus V_2$. This means that, after allocating two of the three items of the tier, a cycle in the form $s_1 \to \dots\to e_2 \to s_2 \to\dots \to e_1\to s_1$, is created. Once we decycle the graph, checking that the EFX property is maintained is trivial. Moreover, there will be again two sources. If one of them does not own one of the two bundles previously owned by the sources, we can allocate the final item. In the unique case where $e_1 = s_2$ and $s_1 = e_2$ were the only choices for envious agents, we can allocate the last item to any of them. To see why, assume that $s_1$ owns $A_{s_2} \cup a \cup c$. Agent $s_2$ cannot strongly envy her and if some other agent does so, it means that $e_1 = s_2$ was not the only choice.
     
After careful inspection, subcases 3a and 3b cover for 8 out of the 9 possible scenarios. It remains to check the case where both envious agents are in $V_{12}$.
     
\noindent$\bullet$\underline{Subcase 3c:} $T_1 \cap V_{12} \neq \emptyset$ and $T_2 \cap V_{12} \neq \emptyset$.
         
\noindent Let $e_1, e_2 \in V_{12}$. Note that since $e_1$ is reachable from $s_1$, if the first node in the path between them belongs to $V_1$, the argument of subcase 3a applies. Therefore, the first envy edge in the path is from $s_1$ to some source of $V_{12}$. If $e_1$ and $e_2$ belong to different weakly connected components of $V_{12}$, we are done since one of the owners ($s_1$ or $s_2$) of some $A_{s'_{12}}$ will get the last item. Assuming the contrary, the image is given in Figure \ref{fig:subcase3c}. 
         
         \begin{figure}[h] 
	\centering
    \begin{tikzpicture}[
node distance = 12mm and 9mm,
     C/.style = {circle, draw, thick, fill=lightblue,
                 minimum size = 8mm,
                 label=#1},
every edge/.style = {draw, -Straight Barb},
                        ]
\node[font=\large] (n1) [C=below:]  {$s_1$};
\node[font=\large] (n2) [C=below:, above right=of n1]  {$s'_{12}$};
\node[font=\large] (n3) [C=below:, below right=of n2]  {$s_2$};
\path   (n1) edge   (n2);
\path   (n3) edge   (n2);
    \end{tikzpicture}
    \caption{Subcase 3c}
    \label{fig:subcase3c}
    \end{figure}
         
    We now select the source, say $s_1$, and the item, say $a$ to allocate, based on $e_1$'s preferences, who we pick to be in maximum topological distance (this can be achieved by checking all possible cases of allocating one item from the tier to $s_1$). After reallocating the bundles along the cycle (if $s_1$ stops envying $s'_{12}$, she becomes a possible new source similar to case 2 and we are done accordingly), the situation now is depicted in Figure \ref{fig:subcase3c-2}.
    
    \begin{figure}[h] 
	\centering
    \begin{tikzpicture}[
node distance = 12mm and 9mm,
     C/.style = {circle, draw, thick, fill=lightblue,
                 minimum size = 8mm,
                 label=#1},
every edge/.style = {draw, -Straight Barb},
                        ]
\node[font=\large] (n1) [C=below:{$A_{s_2}$}]  {$s_2$};
\node[font=\large] (n2) [C=left:{$A_{s'_{12}}$}, above=of n1]  {$s_1$};
\node[font=\large] (n3) [C=below:{$A_{s_1} \cup a$}, right=of n1]  {$e_1$};
\path   (n1) edge   (n2);
\path   (n1) edge[dashed]   (n3);
    \end{tikzpicture}
    \caption{Subcase 3c continued}
    \label{fig:subcase3c-2}
    \end{figure}
The way we picked $e_1$, we have that $e_1$ and every possible node reachable from her cannot strongly envy $A_{s_2} \cup b \cup c$; otherwise the strongly envious agent would have been envious before and in a greater topological distance\footnote{To be precise, $e_1$ could have been envious of $A_{s_2} \cup b$ but then we restart the tier allocation working with $s_2$.}. Therefore, either $s_2$ will get both items or there will be an envious agent she can reach. In the end, whoever owns $A_{s'_{12}}$ will get the last item and the proof is completed.
         
\end{proof}

We close this section, by providing one more positive result, but under an assumption in the opposite direction of Theorem \ref{thm:tiers}, in the same spirit as Corollary \ref{cor:distinct} in Section \ref{sec:top-items}.

\begin{theorem}[Distinct top tiers]
Assuming that each agent has a different favorite tier $T_i$ of size $\lfloor m/n \rfloor$, then an EFX allocation exists. 
\end{theorem}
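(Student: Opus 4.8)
The plan is to follow the template of Corollary~\ref{cor:distinct}: give every agent her favorite tier as an initial bundle and then dispose of the handful of remaining items. Unlike the single‑item case, the tiers here are large enough that at most $n-1$ items can be left over, and this is exactly what lets us reach \emph{exact} EFX instead of a $2/3$ approximation. Write $k=\lfloor m/n\rfloor$ and let $T_1,\dots,T_n$ be the favorite tiers, which I read as pairwise disjoint (in the same sense as the ``distinct favorite good'' of Corollary~\ref{cor:distinct}); thus $v_i(g)\ge v_i(h)$ for every agent $i$, every $g\in T_i$ and every $h\in M\setminus T_i$. Let $R=M\setminus\bigcup_{i\in N}T_i$, so that $0\le |R|=m-nk<n$.

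The first step is to allocate $T_i$ to agent $i$ and to verify that this partial allocation is EFX. For a fixed ordered pair $(i,j)$ with $i\ne j$, we have $|T_i|=|T_j|=k$ and, from $i$'s viewpoint, every item of $T_i$ is worth at least every item of $T_j$ (since $T_j\subseteq M\setminus T_i$). Pairing the two tiers by an arbitrary bijection and applying Lemma~\ref{lem:ineq} along it gives $v_i(T_i)\ge v_i(T_j)$; repeating the argument with any $k-1$ items of $T_i$ against $T_j\setminus g$ yields $v_i(T_i)\ge v_i(T_j\setminus g)$ for every $g\in T_j$, i.e., the EFX condition. Since $|R|<n$, the second step is to distribute the items of $R$ one per agent among $|R|$ distinct agents (arbitrarily). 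Every final bundle is then either $T_i$ or $T_i\cup\{h\}$ with $h\in R$, so the only comparisons not already handled are those against a bundle of the form $T_j\cup\{h\}$: deleting $h$ leaves $T_j$ and we are back to $v_i(T_i)\ge v_i(T_j)$; deleting some $g\in T_j$ leaves $(T_j\setminus g)\cup\{h\}$, a set of $k$ items all lying in $M\setminus T_i$, so bijecting it with $T_i$ and invoking Lemma~\ref{lem:ineq} once more gives $v_i(T_i)\ge v_i((T_j\setminus g)\cup\{h\})$. Since each agent's final bundle contains her $T_i$, EFX holds for every ordered pair, and the whole construction is plainly polynomial.

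The only point requiring care is ensuring that every set comparison invoked is between two equal‑size sets joined by a dominating bijection, so that Lemma~\ref{lem:ineq} (and not merely additivity) applies; this is what makes the result hold for all cancelable valuations, consistently with the rest of Section~\ref{sec:tiers}. Beyond that, the sole structural ingredient is the inequality $|R|<n$, which guarantees that no agent is ever forced to take two items on top of her favorite tier. If instead ``different favorite tier'' were meant to permit the $T_i$ to overlap, the first step would already break down and a genuinely different argument would be needed — I would expect that to be the real obstacle — but I believe the disjoint reading is the intended one.
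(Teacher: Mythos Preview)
Your construction is exactly the paper's: hand each agent her favourite tier $T_i$ and spread the $r=m-nk<n$ leftover items one apiece over $r$ distinct agents. Where you diverge is in the verification. You set up bijections between $T_i$ and the (at most $k$) items of $A_j\setminus g$ and then iterate Lemma~\ref{lem:ineq}, which is correct but both more laborious and narrower than what the paper does. The paper simply observes that after deleting any single item from $A_j$ one is left with a set of cardinality at most $k$; since $T_i$ is agent~$i$'s favourite bundle of size $k$, no set of that size (or smaller, by monotonicity) can be worth more to her, so $v_i(A_i)\ge v_i(T_i)\ge v_i(A_j\setminus g)$ in one line. That argument needs only monotonicity, and indeed the paper notes right after the proof that the result holds for \emph{general} monotone valuations, not just cancelable ones---your bijection-plus-Lemma~\ref{lem:ineq} route unnecessarily restricts the scope. (If one insists on the purely item-wise reading of ``favourite tier'', i.e.\ $v_i(g)\ge v_i(h)$ for all singletons $g\in T_i$, $h\notin T_i$, then your cancelability step is genuinely required; but the paper reads ``favourite tier of size $k$'' as the $v_i$-maximising $k$-set, which is what makes the bare cardinality argument go through for arbitrary monotone $v_i$.)
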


\begin{proof}
We show how to construct the allocation. Let $m = kn + r, 0\le r < n$,  and let each $T_i$ be of size $\lfloor m/n \rfloor = k$. We allocate each $T_i$ to the corresponding agent, and to $r$ of them one of the remaining items arbitrarily. If agent $i$'s envy towards agent $j$ violates the EFX property, it means that $i$ envies a set of cardinality at most $k$, which contradicts the definition of $T_i$. Thus, the allocation is EFX. 
\end{proof}

Note that the above result holds even for general (monotone) valuations.

\section{Conclusions}
We have presented a unifying framework for obtaining approximate EFX allocations. This framework allowed us to reprove and extend some existing results of the literature and at the same time obtain improved approximations for families of instances where agents agree only on their perception of what are the most valuable items. For instances with a relatively large number of goods, this imposes quite minimal assumptions on the structure of the instance. Finally, we have also established the existence of exact EFX allocations by considering assumptions on how the agents rank the whole set of items w.r.t their value, extending the existence result of \cite{plaut2018almost}, that holds under the common ranking assumption.

The most intriguing question of whether exact EFX allocations exist for arbitrary additive valuations still remains open. Along with this, identifying the best possible approximation guarantee that we can have in polynomial time is also unresolved. Overall, we view our work as adding a positive note on these directions, providing evidence that improved approximations and existence results on certain families of instances can still be attainable.


\section*{Acknowledgements}
This research was supported by the Hellenic Foundation for Research and Innovation, by the “1st Call for HFRI Research Projects to support faculty members
and researchers and the procurement of high-cost research equipment” (Project Num. HFRI-FM17-3512). Part of this work was carried out while the second author was a M.Sc. student at the National Technical University of Athens (NTUA) and we acknowledge the use of facilities and resources of NTUA.

\printbibliography 

\end{document}